\newcommand{\hl}{\emph}
\newcommand{\bpow}{\mathbf{p}}
\newcommand{\bbpow}{{\bar{\mathbf{p}}}}
\newcommand{\ttr}{\texttt{tr}}
\newcommand{\mc}{\textsc{m}}
\newcommand{\tig}{\tilde{\Gamma}}
\newcommand{\f}{\textsc{f}}
\newcommand{\e}{\textsc{e}}
\newcommand{\V}{\textsc{v}}
\newcommand{\E}{\mathbb{E}}
\newcommand{\Dr}{{\cal {L}}}
\theoremstyle{plain}
\newtheorem{Theorem}{Theorem}
\newtheorem{Lemma}{Lemma}
\newtheorem{Proposition}{Proposition}
\newtheorem{Corollary}{Corollary}
\newtheorem{Remark}{Remark}
\newtheorem{Example}{Example}
\theoremstyle{remark}
\theoremstyle{example}
\def\tr{\mathrm {tr}}
\def\det{\mathrm {det}}
\def\diag{\mathrm {diag}}
\def\G{\mathbb{GL}_N(\mathbb{C})}
\def\bp{\begin{Proposition}}
\def\ep{\end{Proposition}}
\def\bc{\begin{Corollary}}
\def\ec{\end{Corollary}}
\def\bl{\begin{Lemma}}
\def\el{\end{Lemma}}
\def\be{\begin{equation}}
\def\ee{\end{equation}}
\def\br{\begin{Remark}}
\def\er{\end{Remark}}
\def\brs{\begin{remarks}.\\ \rm\
\begin{enumerate}}
\def\ers{\end{enumerate}\end{remarks}}
\def\bea{\begin{eqnarray}}
\def\eea{\end{eqnarray}}
\def\bx{\begin{Example}}
\def\ex{\end{Example}}
\def\tr{\mathrm {tr}}
\def\det{\mathrm {det}}
\def\diag{\mathrm {diag}}
\def\&{&{\hskip -20pt}}
\def\YDsize{10pt}
\def\YD#1{%
\ifnum#1=0
 \ifnum\YDcount=0 \ifx\varnothing\undefined\emptyset\else\varnothing\fi
 \else\vskip1.4pt\egroup\YDcount=0\fi
\else
 \ifnum\YDcount=0 \YDcount=1\vcenter\bgroup\vskip1pt
 \else\nointerlineskip\fi
 \vbox{\hrule\hbox{\vrule height\YDsize
 \loop\hskip\YDsize\vrule\ifnum\YDcount<#1\advance\YDcount1\repeat}\hrule
 \kern-0.4pt}\expandafter\YD
\fi}
\title{On Products of Random Matrices
}
\author[1,2,3,4]{Natalia Amburg}
 \affil[1]{%
A.I. Alikhanov  Institute for Theoretical and Experimental Physics of NRC “Kurchatov Institute”, B. Cheremushkinskaya, 25, Moscow, 117259, Russia
}
\affil[2]{%
National Research University “Higher School of Economics” , Myasnitskaya Ul., 20, Moscow, 101000, Russia
}%
\affil[3]{%
\quad Institute for Information Transmission Problems of RAS (Kharkevich Institute), Bolshoy Karetny per. 19, build.1, 127051  Moscow, Russia
} 
\affil[4]{%
\quad Moscow Center for Fundamental and Applied Mathematics, 119991 Moscow, Russia
} 
\author[1,5]{Aleksander Orlov} 
 \affil[5]{%
 Institute of Oceanology, Nahimovskii Prospekt 36, 117997 Moscow, Russia
}
\author[1,3,6]{Dmitry Vasiliev}
 \affil[6]{%
 \quad Moscow Institute of Physics and Technology, 141701 Dolgoprudny, Russia
}
 \date{August 30, 2020}                     
\begin{document}
\maketitle

\abstract{We introduce a family of models, which we  name matrix models associated with children's drawings---the so-called
dessin d'enfant. 
Dessins d'enfant are graphs of  a special kind drawn on a closed connected orientable surface (in the sky). 
The vertices of such a graph are small disks that we call stars. We attach random matrices to the edges of the graph
and get multimatrix~models. 
Additionally, to the stars we attach source matrices. They play the role of free parameters or model coupling~constants. 
The answers for our integrals are expressed through quantities that we call the "spectrum of stars." The answers 
may also include some combinatorial numbers, such as Hurwitz numbers or characters from group representation theory.}

\bigskip

\textbf{Key words:} random complex and random unitary matrices, matrix models, products of random matrices, 
Schur polynomial, Hurwitz number, generalized hypergeometric functions, integrable systems

\textbf{2010 Mathematical Subject Classification:} 05A15, 14N10, 17B80, 35Q51, 35Q53, 35Q55, 37K20, 37K30,

\section{Introduction}

Interest in matrix integrals arose in different contexts and at different times. These are problems of statistics 
in biology (Wishart), and problems of quantum chaos (Wigner, Dyson, Gorkov, Eliashberg, Efetov), and purely 
mathematical problems of representation theory (see the \hl{textbook} 
 \cite{VKIII}). At the end 
of the previous century, applications were added in the theory of elementary particles (t'Hooft~\cite{t'Hooft}),
statistical physics (Migdal, Itsikson and Zuber \cite{Itzykson-Zuber}, and Kazakov and Kostov \cite{KazakovKostovNekrasov})
and string theory (Kazakov and Brezin~\cite{BrezinKazakov}, and Migdal and Gross \cite{GrossMigdal}). Now new applications 
have been added: in the theory of information transfer and theory of quantum information (\cite{Collins2016} or lectures \cite{Preskill} (Ch. 10 on quantum Shannon theory)). We 
allow ourselves not to provide an ocean of links for each of these areas.

Moving closer to the point, we only refer to works on the use of random matrices in \cite{Tulino},
works~about 
products of complex random matrices \cite{Ak1,Ak2,AkStrahov}, a review \cite{Ipsen,S1,S2,Chekhov-2014} and works close to ours from a mathematical point of view 
\cite{Kazakov,Kazakov2,Kazakov3,Kazakov-SolvMM,Alexandrov,ChekhovAmbjorn}.

In the theory of information, the product of the matrices describes the cascade transformation of signals, and 
averaging over the matrices means introducing interference and noise. The task is to calculate various correlation 
functions in such models.

We offer a family of models that in a sense can be called exactly solvable (cf. \cite{Kazakov-SolvMM}). 
They are built according to the 
so-called children's drawings, more precisely, clean children's drawings, which in combinatorics are also called maps.
This is a graph drawn on a closed orientable surface, which has the following property: if we cut it along the edges, 
the surface decomposes into regions homeomorphic to~disks (that is, they can be turned into disks by continuous
transformation). In addition to this picture, we turn the vertices of the graph into small disks; we call these 
disks stars.

The edges of the graph are assigned matrices over which the integration is performed. Depending on the~graph we draw, we will
get one model or another.

Surprisingly, it turns out that studying the products of random matrices with sources is an easier 
and more natural task. Writing answers for such integrals turns out to be a faster task if we have these~matrices. 
The absence of these matrices is equivalent to some additional averaging that needs to be specifically monitored.

Writing out some answers requires knowledge of certain combinatorial numbers for which tables~exist. In one case, 
these 
are the so-called Hurwitz numbers; in the other, these are the characters of representations of a symmetric group. 
However, sometimes the answers are simplified and written in \mbox{a fairly} simple form-in the form of determinants or in 
the form of products.

Adding source matrices leads to a variety of interesting relationships with differential operators. We  briefly 
mention this in the Appendix \ref{DiffOp}.

\section{Technical Tools}

\vspace{-6pt}

\subsection{Partitions; Power Sums and Schur Functions; Hurwitz Numbers
\label{Partitions-and-Schur-functions}}

Here we follow \cite{Mac}. Further technical details are in Appendix \ref{Partitions-and-Schur-functions-} and Appendix \ref{IOUG} 

{Partitions.} The \textit{partition} 
$\lambda=(\lambda_1,\dots,\lambda_k)$ is a set of nonnegative integers $\lambda_i$ which are called
parts of $\lambda$ and which are ordered as $\lambda_i \ge \lambda_{i+1}$. 
The number of non-vanishing parts of $\lambda$ is called the length of the partition $\lambda$, and will be denoted by
 $\ell(\lambda)$. The number $|\lambda|=\sum_i \lambda_i$ is called the weight
 of $\lambda$.  The set of all partitions will be denoted by $\Upsilon$ and the set of partitions of weight $d$ 
 by $\Upsilon_d$. Example: the~partition~$(4,4,1)$ belongs to $\Upsilon_9$. The length of $(4,4,1)$ is 3.

 We shall use Greek letters for partitions.
 
Sometimes, it is convenient to use a different notation
 \[
  \lambda =\left(1^{m_1}2^{m_2}3^{m_3}\cdots \right),
 \]
where $m_k$ is the number of times an integer $k$ occurs as a part of $\lambda$. For instance, $(4,4,1)$ may be written 
as $\left(1^14^2 \right)$. The number
\be\label{z-lambda}
z_\lambda =\prod_{k>0} k^{m_k}m_k!
\ee
plays an important role hereinafter. 

Example: $z_{(4,4,1)}=z_{(1^14^2)}=1^11!4^22!=32$.
 
 Partitions can be perceived visually using Young diagrams (YDs): a partition $\lambda$ with parts 
 $\lambda_1,\dots,\lambda_\ell$ matches 
 the rows of the corresponding YD of the length $\lambda_1,\dots,\lambda_\ell$, respectively; see \cite{Mac}
 for details. The weight of $\lambda$ is the area of the YD of $\lambda$.

 {Power sums.}
For a matrix $X\in GL_N(\mathbb{C})$, we put
 \be\label{Newton-sums}
p_k(X) := \ttr X^k= x_1^k +\cdots + x_N^k,\qquad k=1,2,\dots,
\ee
which are the Newton sums of its eigenvalues.

Then, for a partition $\Delta=(\Delta_1,\dots,\Delta_k)$, we introduce
  \be\label{power-sums}
 \bpow_\Delta(X) = \ttr( X^{\Delta_1})\dots 
\ttr( X^{\Delta_\ell}).
 \ee
 
 We put $\bpow_{(0)}(X)=1$.

  \br\label{X-to-C}
 Let us note that $\bpow_\Delta(X)$ is a polynomial in the eigenvalues of $X$ and \textit{also} a polynomial in
 entries of $X$. We consider $\bpow_\Delta(X)$ to be a map $GL_N(\mathbb{C})\rightarrow \mathbb{C}$.
 \er
The polynomial $\bpow_\Delta$ is a symmetric function of the eigenvalues $x_1,\dots,x_N$ and called
the \textit{power sum} labeled with a multi-index $\Delta$. 

If one assigns a degree 1 to each $x_i$, then the degree of $\bpow_\Delta(X)$ is $|\Delta|$.

In many problems, the set $p_1,p_2,\dots$ is considered as an independent set of complex parameters, which are in 
no way associated with any matrix. In this case, instead of $\bpow(X)$ we write simply $\bpow=(p_1,p_2,\dots)$.
The degree of $p_m$ is equal to $m$.


{Schur functions.} 
Next, let us recall the definition of the \textit{Schur polynomial}, parametrised by \mbox{a partition.} 
Consider~the~equality
\[
 e^{\sum_{m>0} \frac 1m z^m  p_m } = 1+z p_1+z^2\frac{p_1^2+p_2}{2}+\dots =\sum_{m\ge 0} z^m s_{(m)}(\bpow).
\]

The polynomials $s_{(m)}$ are called elementary Schur functions. 
As we can see, $s_{(0)}(\bpow)=1$.
Let $\lambda=[\lambda_1,\lambda_2,\dots]$ 
be a partition, and define the polynomial $s_\lambda$ by 
\be\label{Schur-det-p}
 s_\lambda(\bpow) = \det\left( s_{(\lambda_i-i+j)}(\bpow)  \right)_{i,j\ge 1}.
\ee
 (On the right-hand side, it is assumed that $s_{(m)}=0$ for negative $m$.)
 Here we define the Schur function as a function of the set of free parameters $\bpow=(p_1,p_2,\dots)$.
 
 Let us remember that we chose the variables $\bpow=(p_1,p_2,\dots)$ to be related to a matrix 
 $X\in \mathbb{GL}_N(\mathbb{C})$.
 Then $s_\lambda$ is a map $\mathbb{GL}_N(\mathbb{C}) \to \mathbb{C}$; we will write 
 $$
 s_\lambda(X):=s_\lambda(\bpow(X)) .
 $$
\br\label{s=polynom} 
Let us note that $s_\lambda(X)$ is a polynomial in entries of $X$ and a symmetric polynomial in the 
eigenvalues $x_1,\dots,x_N$ of degree $d=|\lambda|$.
 \er
 
\begin{Remark}[\cite{Mac}]  The Schur functions $s_\lambda(x_1,\dots, x_N)$, where $\ell(\lambda)\ge N$, 
 form a $\mathbb{Z}$-basis of the space of symmetric 
 polynomials in $x_1,\dots, x_N$ of degree $d$.
\end{Remark}
In terms of the eigenvalues $x_1,\dots,x_N$, the Schur function reads as
 \be\label{Schur-x}
 s_\lambda(X)=\frac{\det \left[x_j^{\lambda_i-i+N}\right]_{i,j\le N}}{\det \left[x_j^{-i+N}\right]_{i,j\le N}}
 \ee
 for $\ell(\lambda)\le N$, and vanishes for $\ell(\lambda) > N$. One can see that $s_\lambda(x)$ is a 
 symmetric homogeneous polynomial of degree $|\lambda|$ in the variables $x_1,\dots,x_N$.
 \br The polynomial
 $s_\lambda$ is the character of the irreducible representation of the $\mathbb{GL}_N(\mathbb{C})$ group
 labeled by~$\lambda$.
 \er

{Character map relation.}  
 Relation (\ref{Schur-det-p}) relates polynomials $s_\lambda$
 and $\bpow_\Delta$ of the same degree $d=|\lambda|=|\Delta|$.
 Explicitly, one can~write
\be\label{powersum-Schur}
\bpow_{\Delta}=\sum_{\lambda\in\Upsilon_d} 
\frac{{\rm dim}\lambda}{d!}
z_\Delta\varphi_\lambda(\Delta) s_\lambda(\bpow)
\ee
 and
\be\label{Schur-char-map}
s_\lambda(\bpow)=\frac{{\rm dim}\lambda}{d!}\sum_{\Delta\in\Upsilon_d } 
\varphi_\lambda(\Delta)\bpow_{\Delta}.
\ee

The last relation is called the character map relation.
Here 
\be\label{dim}
 \frac{{\rm dim}\lambda}{d!}: =
\frac{\prod_{i<j\le N}^{}(\lambda_i-\lambda_j-i+j)  }{\prod_{i=1}^{N}(\lambda_i-i+N)!}
=s_\lambda(\bpow_\infty)
\ee
(see example 1 in Section 1 and example 5 in Section 3 of Chapter I in \cite{Mac}), where 
\be\label{p-infty}
\bpow_\infty = (1,0,0,\dots).
\ee
and where   $N\ge \ell(\lambda)$. As one can check, the right-hand side does not depend on $N$.
(We recall that $\lambda_i=0$ in case $i>\ell(\lambda)$). The number
${\rm dim}\lambda$ is an integer.

The factors $\varphi_\lambda(\Delta)$ satisfy the following orthogonality relations.
\be\label{orth1}
\sum_{\lambda \in\Upsilon_d} \left(\frac{{\rm\dim}\lambda}{d!}\right)^2\varphi_\lambda(\mu)\varphi_\lambda(\Delta) =
 \frac{ \delta_{\Delta,\mu} }{z_{\Delta}}
\ee
and
\be\label{orth2}
\left(\frac{{\rm\dim}\lambda}{d!}\right)^2
\sum_{\Delta\in\Upsilon_d} z_\Delta\varphi_\lambda(\Delta)\varphi_\mu(\Delta) =
\delta_{\lambda,\mu}.
\ee

\br\label{char-char} Equality (\ref{Schur-char-map}) expresses the $\mathbb{GL}_N$ character $s_\lambda$ in terms of
characters $\chi_\lambda$ of the symmetric group $S_d$ labeled by the same partition $\lambda$
and evaluated on cycle classes $\Delta\in\Upsilon_d$. 
This explains the name of (\ref{Schur-char-map}).
\mbox{The numbers} $\varphi_\lambda(\Delta)$ are
called the normalized characters:
\[
 \chi_\lambda(\Delta)= \frac{{\rm\dim}\lambda}{d!} z_\Delta  \varphi_\lambda(\Delta).
\]

The integer ${\rm\dim}\lambda$ is the dimension of the representation $\lambda$; that is,
${\rm\dim}\lambda=\chi_\lambda\left((1^d)  \right)$. Equations~(\ref{orth1})~and~(\ref{orth2})
are the orthogonality relation for the characters.

\er

{Hypergeometric tau functions and determinantal formulas.} 

Here we follow \cite{OS-2000,OS-TMP}.
Let $r$ be a function on the lattice $\mathbb{Z}$.
Consider the following series
\be\label{tau-elementary}
 1+r(n+1)x+r(n+1)r(n+2)x^2+r(n+1)r(n+2)r(n+3)x^3 + \cdots =: \tau_r(n,x)
\ee

Here $n$ is an arbitrary integer.
This is just a Taylor series for a function $\tau_r$ with a given $n$  written in a form that is as close as 
possible to typical 
hypergeometric series. Here $n$ is just a parameter that will be of use later. 
If as $r$ we take a rational (or trigonometric) function, we get a generalized (or basic) hypergeometric series.
Indeed, take 
\be\label{r-rational}
r(n)=\frac{\prod_{i=1}^p(a_i+n)}{n\prod_{i=1}^q(b_i+n)}
\ee

We obtain
$$
\tau_r(n,x)=\sum_{m\ge 0} \frac{\prod_{i=1}^p(a_i+n)_m}{m!\prod_{i=1}^q(b_i+n)_m} x^m=
{_{p}F}_q\left({a_1+n,\dots,a_p+n\atop b_1+n,\dots, b_q+n} \mid x\right)
$$
where 
\be
(a)_n=a(a+1)\cdots(a+n-1)=\frac{\Gamma(a+n)}{\Gamma(a)}
\ee
is the Pochhammer symbol.

One can prove the following formula 
which express a certain series over partitions in terms of~(\ref{tau-elementary})
(see, for instance, \cite{OS-2000}):
\be\label{tau(nXY)}
\tau_r(n,X,Y):= \sum_{\lambda} r_\lambda(n) s_\lambda(X)s_\lambda(Y) =
\frac{c_n}{c_{n-N}}\frac{\det \left[ \tau_r(n-N+1,x_iy_j) \right]_{i,j\le N}}
{\det \left[ x_i^{N-k} \right]_{i,k\le N}\det \left[ y_i^{N-k} \right]_{i.k\le N}}
\ee
 where $c_k=\prod_{i=0}^{k-1}\left(r(i)  \right)^{i-k}$  and where
\be\label{content-product}
r_\lambda(n):=\prod_{(i,j)\in\lambda} r(n+j-i),\quad r_{(0)}(n)=1
\ee
where the product ranging over all nodes of the Young diagram $\lambda$
is the so-called content product (which has the meaning of the generalized
Pochhammer symbol related to $\lambda$).
Let us test the formula for the simplest case $r\equiv 1$:
\be
\det\left[1-X\otimes Y\right]=\sum_{\lambda} s_\lambda(X)s_\lambda(Y)
=\frac{\det\left[( 1-x_iy_j )^{-1} \right]_{i,j\le N}}
{\det \left[ x_i^{N-k} \right]_{i,k\le N}\det \left[ y_i^{N-k} \right]_{i.k\le N}}
\ee

Sometimes we also use infinite sets of power sums $\bpow^{i}=(p^{(i)}_1,p^{(i)}_1,\dots )$ and instead of matrices
$X$ and $Y$ set:
\be\label{tau-p1-p2}
\tau_r(n,\bpow^1,\bpow^2):= \sum_{\lambda} r_\lambda(n) s_\lambda(\bpow^1)s_\lambda(\bpow^2)
\ee

An example $r\equiv 1$:
\be\label{C-Littlewood}
\tau_1(n,\bpow^1,\bpow^2):=e^{\sum_{m>0} \frac 1m p_m^{(1)} p_m^{(2)} }=\sum_{\lambda} s_\lambda(\bpow^1)s_\lambda(\bpow^2)
\ee

In case the function $r$ has zeroes, there exists a determinantal representation. 
Suppose $r(0)$; then~$r_\lambda(n)=0$ if $\ell(\lambda)>n$. Then
 \be\label{tau(npp)}
\tau_r(n,\bpow^1,\bpow^2)= \sum_{\lambda\atop\ell(\lambda)\le n} r_\lambda(n) s_\lambda(\bpow^1)s_\lambda(\bpow^2)
= c_n \det\left[ \partial_{p_1^{(1)}}^{a}\partial_{p_1^{(2)}}^{b} \tau_r(1,\bpow^1,\bpow^2) \right]_{a,b=0,\dots,n-1},
\ee
where $c_k=\prod_{i=1}^{k-1}\left(r(i)  \right)^{i-k}$, and where
$$
\tau_r(1,\bpow^1,\bpow^2)=1+\sum_{m>0}r(1)\cdots r(m)s_{(m)}(\bpow^1)s_{(m)}(\bpow^2);
$$
 see \cite{O-2004-New}. 
 
 In addition, there is 
the following formula:
\be\label{tau(nXp)}
\tau_r(n,X,\bpow)= \sum_{\lambda} r_\lambda(n) s_\lambda(X)s_\lambda(\bpow)=
\frac{\det\left[ x_i^{N-k}
\tau_r\left(n-k+1, x_i,\bpow\right) \right]_{i,k\le N}}
{\det \left[ x_i^{N-k} \right]}
\ee
where
\be
\tau_r\left(n, x_i,\bpow\right)=1+\sum_{m > 0}r(n)\cdots r(n+m) x_i^m s_{(m)}(\bpow)
\ee

Let us test it for $r\equiv 1$, $\bpow=\bpow_\infty:=(1,0,0,\dots)$:
\be
e^{\tr X}= \sum_\lambda s_\lambda(X)s_\lambda(\bpow_\infty)=
\frac{\det\left[ x_i^{N-k}
e^{x_i} \right]_{i,k\le N}}
{\det \left[ x_i^{N-k} \right]}
\ee

There are similar series
$$
\sum_{\lambda} r_\lambda(n) s_\lambda(X)
$$
which can be written as a Pfaffian \cite{OST-I}; however, we will not use them in the present text.

Some properties of the Schur functions. Let us consider
\bl \label{2-properties}
For $X\in\mathbb{GL}_N$, where $\det X\neq 0$, and for 
$\lambda=(\lambda_1,\dots,\lambda_N)$, $\ell(\lambda)\le N$, we have
\be
s_\lambda(X)\det (X^\alpha)=s_{\lambda + \alpha}(X)
\ee
where $\alpha$ is a nonnegative integer and
where $\lambda+\alpha$ denotes the partition with parts $(\lambda_1+\alpha,\lambda_2+\alpha,\dots,\lambda_N+\alpha)$,
in particular
\be
s_\lambda(\mathbb{I}_N)=s_{\lambda+\alpha}(\mathbb{I}_N).
\ee

Additionally
\be
\frac{ s_\lambda(\bpow_\infty)}{s_{\lambda+\alpha}(\bpow_\infty)}=
 \prod_{i=1}^N \frac{\Gamma(h_i+1+\alpha)}{\Gamma(h_i+1)}=
 \frac{(N+\alpha)_\lambda}{(N)_\lambda}
 \ee 
\el

{Hurwitz number.} Let $\e\le 2$ be an integer and 
$\Delta^1=\left(\Delta^1_1,\Delta^1_2,\dots   \right),\dots,
\Delta^k=\left(\Delta^k_1,\Delta^k_2,\dots   \right) \in\Upsilon_d$. One can show that 
\be\label{Mednyh}
H_\e\left(\Delta^1,\dots,\Delta^k \right)=
\sum_{\lambda\in\Upsilon_d}\left(\frac{{\rm dim} \lambda}{d!} \right)^\e
\varphi_\lambda(\Delta^1)\cdots\varphi_\lambda(\Delta^k)
\ee
is a rational number. This number is called the Hurwitz number, which is a popular combinatorial
object in many fields of mathematics (see, for instance, \cite{ZL,Okounkov-Pand-2006}) and also 
in physics \cite{Dijkgraaf}. The explanations of the geometrical and combinatorial meanings of Hurwitz numbers 
may be found 
in Appendices~\ref{Hurwitz-geometric-section}~and~\ref{Hurwitz-combinatorial-section}.
We also need a weighted Hurwitz number
\be\label{Mednyh-m}
H_\e\left(\Delta^1,\dots,\Delta^k |m\right)=
\sum_{\lambda\in\Upsilon_d}\left(\frac{{\rm dim} \lambda}{d!} \right)^\e
\varphi_\lambda(\Delta^1)\cdots\varphi_\lambda(\Delta^k)\left(\frac{1}{(N)_\lambda}\right)^m
\ee
 
\subsection{Mixed Ensembles of Random Matrices \label{MixedEnsembles}}
 A complex number
$\E_{n_1,n_2}\{f\}$ is the notation of the expectation values of a function $f$ which depends
on the entries of matrices $Z_1,\dots,Z_{n_1}\in \mathbb{GL}_N(\mathbb{C})$ and of matrices
$U_1,\dots,U_{n_2}\in \mathbb{U}_N$:
\be\label{mixed}
\E_{n_1,n_2}\{f\}=\int f(Z_1,\dots,Z_{n_1},U_1,\dots,U_{n_2})
d\Omega_{n_1,n_2},
\ee

\be
d\Omega_{n_1,n_2} =  \prod_{i=1}^{n_1}d\mu(Z_i) \prod_{i=1}^{n_2} d_*U_i ,
\ee
where $d_*U_i$  ($i=1,\dots,n_2$) is the Haar measure on $\mathbb{U}_N$ and where
$$
d\mu(Z_i)=c\prod_{a,b}e^{-\hbar^{-1}|(Z_i)_{a,b}|^2}d^2 (Z_i)_{a,b}
$$ 
is the Gaussian measure.
Here $\hbar$ is a parameter; usually it is chosen to be $N^{-1}$.

Each set of $Z_i$ and $d\mu(Z_i)$ is called complex Ginibre ensemble and the whole set
$Z_1,\dots,Z_{n_1}$ and $d\mu(Z_1),\dots,d\mu(Z_{n_1})$ are called $n_1$ \textit{independent complex Ginibre ensembles}.
The set $U_1,\dots,U_{n_2}$ and the measure $d_*U_1,\dots,d_*U_{n_2}$ are called $n_2$ 
\textit{independent circular ensembles}.
We assume each $\int d_*U _i=\int d\mu(Z_i)=1$. 

The ensemble of the matrices $Z_1,\dots,Z_{n_1},U_1,\dots,U_{n_2}$ together with the probability measure
$d\Omega_{n_1,n_2}$
(that is, with expectation values defined  by (\ref{mixed})) we call a $(n_1,n_2)$ \textit{mixed ensemble}.
We will consider mixed ensembles with $n=n_1+n_2$ random matrices.

\subsection{Integrals of Schur Functions and Integrals of Power Sums}

In what follows we  study expectation values of Schur functions and of power sums.
We need four key lemmas.

These lemmas should be known in parts corresponding to Schur functions, but at the moment I have not found all 
the lemmas along with the proofs, so I will try to fill this gap.

 In the Lemmas \ref{Lemma-s-Z}--\ref{Lemma-s-s-U} below $A$ and $B$  are $N\times N$ \textit{complex} matrices;
this point is the key.
 
 Everywhere in this section, $ d = 1, \dots, N $.

\bl 
\label{Lemma-s-Z}

(I) For any $\lambda\in\Upsilon_d$ we have
\be\label{s(ZAZB)} 
\E_{1,0}\left\{ s_\lambda(ZAZ^\dag B)\right\} = \hbar^{d}\frac{s_\lambda(A) s_\lambda(B)}{s_\lambda(\bpow_\infty)},
\ee
where $s_\lambda(\bpow_\infty)$ is given in (\ref{dim}).

(II)  Equivalently,  for any $\Delta\in \Upsilon_d$ we have
\be\label{Hpp}
\E_{1,0}\left\{\,
\bpow_\Delta(ZAZ^\dag B)\,\right\}\,=
\, z_\Delta \hbar^{d}\,\sum_{\Delta^a,\Delta^b\in\Upsilon_d}\,
H_2(\Delta,\Delta^a,\Delta^b)\,\bpow_{\Delta^a}(A)\,\bpow_{\Delta^b}(B),
\ee
where
\be\label{Hurwtiz-H(*,*,*)}
H_2(\Delta,\Delta^a,\Delta^b)=\sum_{\lambda\in\Upsilon_d}
\left(\frac{{\rm dim} \lambda}{d!} \right)^2
\varphi_\lambda(\Delta)\varphi_\lambda(\Delta^a)\varphi_\lambda(\Delta^b)
\ee
is the three-point Hurwitz number (\ref{Mednyh}).
\el

Note that Hurwitz numbers do not depend on the order of its arguments, in particular, for (\ref{Hurwtiz-H(*,*,*)}),
$H_2(\Delta,\Delta^a,\Delta^b)=H_2(\Delta,\Delta^b,\Delta^a)=\cdots=H_2(\Delta^b,\Delta^a,\Delta)$.
\begin{proof}

(i) Relation (\ref{s(ZAZB)}) is known for Hermitian $A,B$; see \cite{Mac}, chap. VII, section 5, example 5.
Taking into account Remark \ref{s=polynom}, we see that both sides of (\ref{s(ZAZB)}) can be analytically
continued as functions of matrix entries of $A$ and $B$. Therefore, (\ref{s(ZAZB)}) is correct for 
$A,B\in\mathbb{GL}_N(\mathbb{C})$.

(ii) Relation (\ref{Hpp}) follows from (\ref{s(ZAZB)}) and vice versa. To see this we use (\ref{orth1}).
For example, let us get (\ref{Hpp}) from (\ref{s(ZAZB)}). We replace all Schur functions in 
(\ref{s(ZAZB)}) with power sums in accordance with   
(\ref{Schur-char-map}). Then we multiply the both sides of relation (\ref{s(ZAZB)}) by 
$\varphi_\lambda(\Delta) {\rm \dim}\lambda $, 
	then we sum over $\lambda$. Then the first orthogonality relation (\ref{orth1}) and (\ref{Mednyh}) results in (\ref{Hpp}).

(iii) The alternative way to prove the Lemma is to start with the left-hand side of (\ref{Hpp}), where~$A,B\in\mathbb{GL}_N(\mathbb{C})$. 
Such integrals were considered in \cite{NO2020} in the context of generating of Hurwitz numbers
(see~Appendix \ref{Hurwitz-geometric-section}).
Using the results of \cite{NO2020} we state that it is equal to the 
right-hand side of~(\ref{Hpp}),
where~$H_2(\Delta,\Delta^a,\Delta^b)$ is the three-point Hurwitz number given by (\ref{Mednyh}),
namely, to right-hand side of~(\ref{Hurwtiz-H(*,*,*)}).
Then with the help of orthogonality relation
(\ref{orth1}), we derive (\ref{s(ZAZB)}), where $A,B\in\mathbb{GL}_N(\mathbb{C})$. 

\end{proof}

\br
Regarding the last point (iii): The volume of the article does not allow describing the construction of work 
\cite{NO2020,NO2020F}.
In short:
The derivation of the formula (\ref{Hurwtiz-H(*,*,*)}), and the derivation of more general 
formulas~(\ref{E-power1})~below, are based on the 
application of Wick's 
theorem and the realization of the fact that each Wick pairing corresponds to a certain gluing of surfaces 
from polygons: the surface obtained in the first order of the perturbation theory ($|\Delta|=d=1$) is basic.
(For instance, the torus can be obtained from a rectangular by gluing the opposite sides;
see (a) and (b) in \hl{Figure} \ref{fig:figure5}.) The surfaces  obtained in the following orders ($d>1$) are the surfaces which cover the basic one.
In this case, the Hurwitz numbers simply give a weighted number of possible covering surfaces, where the Young 
diagrams correspond to the so-called branching profiles. The basic surface in case (\ref{Hurwtiz-H(*,*,*)})
is a sphere, glued from the 2-gon; see \hl{Figure} \ref{fig:figure2}b below where $X_1=Z,\,X_{-1}=Z^\dag, \,C_1=A,\,C_{-1}=B$.
\er
Examples of (\ref{Hurwtiz-H(*,*,*)}). Suppose $\Delta=(2)$ on the left-hand side of (\ref{Hurwtiz-H(*,*,*)}).
It is known that $H_2((2),\Delta^a,\Delta^b)=1/2$ in two cases---where $\Delta^a=(2),\,\Delta^b=(1,1)$ and
where $\Delta^a=(1,1),\,\Delta^b=(2)$---and it is zero otherwise. Additionally $z_{(2)}=2$, see (\ref{z-lambda}). Thus,
$$
\E_{1,0}\left\{\,
\tr\left((ZAZ^\dag B)^2\right)\,\right\}\,=
\, \hbar^2\,\tr\left( A^2 \right)\,\left(\tr B \right)^2 
+
\, \hbar^2\,\left( \tr A \right)^2\,\tr\left( B^2 \right). 
$$

Next, suppose $\Delta=(1,1)$. As we know, $H_2((1,1),(2),(2))=1/2$ and vanishes for different choices
of $\Delta^a,\Delta^b$, and $z_{(1,1)}=2$; see (\ref{z-lambda}); therefore,
$$
\E_{1,0}\left\{\,
\left( \tr ZAZ^\dag B \right)^2\,\right\}\,=
\, \hbar^2\,\tr\left( A^2 \right)\,\tr \left( B^2 \right).  
$$

\begin{Corollary} Suppose $\det A \neq 0$, $\det B \neq 0$, and 
$\lambda=(\lambda_1,\dots,\lambda_N)$, $\ell(\lambda)\le N$. We get
 \be\label{detZZ} 
\E_{1,0}\left\{ s_\lambda(ZAZ^\dag B)\det\left((Z^\dag Z)^{p_0} \right) \right\} = 
\hbar^{d+p_0 N}\frac{s_\lambda(A) s_\lambda(B)}{s_\lambda(\bpow_\infty)} 
\frac{(N+p_0)_\lambda}{(N)_\lambda}
\ee

 In particular,
  \be\label{detZZ'} 
\E_{1,0}\left\{ s_\lambda(ZAZ^\dag)\det\left((Z^\dag Z)^{p_0} \right) \right\} = 
\hbar^{d+p_0N}s_\lambda(A)
(N+p_0)_\lambda
\ee
 
\end{Corollary}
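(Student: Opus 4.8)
The plan is to deduce both formulas from the main lemma (Lemma~\ref{Lemma-s-Z}) by absorbing the determinantal weight $\det((Z^\dag Z)^{p_0})$ into a uniform shift of the partition, exactly as packaged in Lemma~\ref{2-properties}. I assume, as is implicit, that $p_0$ is a nonnegative integer, so that $\lambda+p_0=(\lambda_1+p_0,\dots,\lambda_N+p_0)$ is again a partition of length $\le N$, now of weight $|\lambda|+p_0N=d+p_0N$; this already accounts for the power $\hbar^{d+p_0N}$ that must appear.

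First I would apply the shift identity $s_\lambda(X)\det(X^{p_0})=s_{\lambda+p_0}(X)$ of Lemma~\ref{2-properties} to $X=ZAZ^\dag B$, which is invertible for almost every $Z$ because $\det A,\det B\neq 0$. Since the determinant is multiplicative and $\det Z^\dag=\overline{\det Z}$, one has $\det((ZAZ^\dag B)^{p_0})=(\det A)^{p_0}(\det B)^{p_0}\det((Z^\dag Z)^{p_0})$, which yields the pointwise rewriting
$$s_\lambda(ZAZ^\dag B)\det((Z^\dag Z)^{p_0})=(\det A)^{-p_0}(\det B)^{-p_0}\,s_{\lambda+p_0}(ZAZ^\dag B).$$
Taking $\E_{1,0}$ and applying (\ref{s(ZAZB)}) to the partition $\lambda+p_0$ gives
$$\E_{1,0}\{s_{\lambda+p_0}(ZAZ^\dag B)\}=\hbar^{d+p_0N}\frac{s_{\lambda+p_0}(A)\,s_{\lambda+p_0}(B)}{s_{\lambda+p_0}(\bpow_\infty)}.$$
Using the same identity once more in the numerator, $s_{\lambda+p_0}(A)=(\det A)^{p_0}s_\lambda(A)$ and $s_{\lambda+p_0}(B)=(\det B)^{p_0}s_\lambda(B)$, so that the determinant factors cancel the prefactor $(\det A)^{-p_0}(\det B)^{-p_0}$. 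It then remains to rewrite the denominator through the last identity of Lemma~\ref{2-properties}, namely $s_\lambda(\bpow_\infty)/s_{\lambda+p_0}(\bpow_\infty)=(N+p_0)_\lambda/(N)_\lambda$, which produces precisely (\ref{detZZ}).

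Finally, for (\ref{detZZ'}) I would set $B=\mathbb{I}_N$ in (\ref{detZZ}) and invoke the content-product evaluation of the identity character, $s_\lambda(\mathbb{I}_N)=(N)_\lambda\,s_\lambda(\bpow_\infty)$ (consistent with the hook-content form of (\ref{dim}), see \cite{Mac}); hence $s_\lambda(\mathbb{I}_N)/s_\lambda(\bpow_\infty)=(N)_\lambda$ cancels the $(N)_\lambda$ in the denominator and leaves $\hbar^{d+p_0N}s_\lambda(A)(N+p_0)_\lambda$. There is no genuine analytic obstacle here, since all the integration is already done in Lemma~\ref{Lemma-s-Z}; the only point deserving care is bookkeeping---retaining the nonnegative-integer hypothesis on $p_0$ so that $\lambda+p_0$ stays an honest partition and Lemma~\ref{2-properties} remains applicable throughout.
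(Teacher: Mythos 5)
Your argument is correct and follows essentially the same route as the paper: absorb $\det((Z^\dag Z)^{p_0})$ into the shifted partition $\lambda+p_0$ via Lemma~\ref{2-properties}, apply (\ref{s(ZAZB)}) to $\lambda+p_0$, and convert the ratio $s_\lambda(\bpow_\infty)/s_{\lambda+p_0}(\bpow_\infty)$ into $(N+p_0)_\lambda/(N)_\lambda$. The only point the paper adds beyond your write-up is the closing remark that the resulting right-hand side admits analytic continuation in $p_0$ beyond nonnegative integers, which you explicitly set aside.
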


\begin{proof}
In case $p_0$ is a natural number the proof is as follows:
 $$
 s_\lambda(ZAZ^\dag B)\det\left(  (ZAZ^\dag B )^{p_0}\right)=
 s_\lambda(ZAZ^\dag B)\det\left(  (ZZ^\dag  )^{p_0}\right)\det\left(  (A)^{p_0}\right)\det\left(  ( B )^{p_0}\right)
 = s_{\lambda+p_0} (ZAZ^\dag B)
 $$
 and from Lemma \ref{2-properties}:
 where $\lambda+p_0=(\lambda_1+p_0,\lambda_2+p_0,\dots,\lambda_N +p_0)$, $\ell(\lambda')=N$. We have
 \be
\frac{ s_\lambda(\bpow_\infty)}{s_{\lambda'}(\bpow_\infty)}=
 \prod_{i=1}^N \frac{\Gamma(h_i+1+p_0)}{\Gamma(h_i+1)}=
 \frac{(N+p_0)_\lambda}{(N)_\lambda}
 \ee
 
 As we see he right-hand side can by analytically continued as the function of $p_0$.
\end{proof}

\bl \label{Lemma-s-s-Z} 

(I) Suppose $\lambda\in\Upsilon_d$, and let $\nu$ be any partition. We have
\be\label{s(ZA)s(ZB)} 
\E_{1,0}\left\{ s_\lambda(ZA) s_\nu(Z^\dag B)\right\} = 
\hbar^{d}\delta_{\lambda\nu}\frac{s_\lambda(AB)}{s_\lambda(\bpow_\infty)}.
\ee
where $s_\lambda(\bpow_\infty)$ is given in (\ref{dim}).

(II) Equivalently, let $\Delta^a\in\Upsilon_d$ and $\Delta^b\in\Upsilon_{d'}$. Then
\be\label{Hp2}
\E_{1,0}\left\{\,
\bpow_{\Delta^a}(ZA)\, \bpow_{\Delta^b}(Z^\dag B)\,\right\}\,=\, z_{\Delta^a} z_{\Delta^b}\, 
\delta_{d,d'}
\hbar^{d}\,
\sum_{\Delta\in\Upsilon_d}\,
H_{2}(\Delta^a,\Delta^b,\Delta)\,\bpow_{\Delta}(AB),
\ee 
where $ H_{\mathbb{CP}^1}(\Delta,\Delta^a,\Delta^b)$  
is given by (\ref{Hurwtiz-H(*,*,*)}).
\el
	The identity (\ref{s(ZA)s(ZB)}) is well-known; for instance, see \cite{Mac}.
Relation (\ref{Hp2}) was proven independently in~\cite{NO2020}. To obtain (\ref{s(ZA)s(ZB)}) from (\ref{Hp2})  we replace
power sums under the integral with the left-hand side of (\ref{powersum-Schur}) and use (\ref{Mednyh}) and the orthogonality relations
(\ref{orth1}) and (\ref{orth2}).




Denote $N\times N$ identity matrix  $\mathbb{I}_N$. The following equality is known (see \cite{Mac}: combine
examples 4 and 5 of  Section 3 and example 1 from Section I of chapter I):
\be
s_\lambda(\mathbb{I}_N)=(N)_\lambda s_\lambda(\bpow_\infty).
\ee

Here
\be\label{N-lambda}
 (N)_\lambda :=(N)_{\lambda_1}(N-1)_{\lambda_2}\cdots (N-\ell+1)_{\lambda_\ell},
\ee
where $\lambda=(\lambda_1,\dots,\lambda_\ell)$, $\lambda_\ell \neq 0$ and
where $(a)_k:=a(a+1)\cdots (a+k-1)=\frac{\Gamma(a+k)}{\Gamma(a)}$ is the \mbox{Pochhammer symbol}.

\bl \label{Lemma-s-U} For any $\lambda\in\Upsilon_d$, we have
\begin{equation}\label{sAUBU^+1}
\E_{0,1}\left\{ s_\lambda(UAU^{-1}B)\right\}=
\frac{s_\lambda(A)s_\lambda(B)}{s_\lambda(\mathbb{I}_N)} \ .
\end{equation}
\el
The proof is contained in \cite{Mac} chap VII, Section 5, example 3.
\bl\label{Lemma-s-s-U} Suppose $\lambda\in\Upsilon_d$. For any $\mu$, we get
\begin{equation}\label{sAUU^+B'}
\E_{0,1}\left\{ s_\mu(UA)s_\lambda(U^{-1}B)\right\} =
\frac{s_\lambda(AB)}{s_\lambda(\mathbb{I}_N)}\delta_{\mu,\lambda}\,.
\end{equation}
\el
The proof follows from relations (1) and (3) chap VII, Section 5, example 1 in \cite{Mac}.

\br
Formula (\ref{sAUBU^+1}) gives the fastest derivation of the famous formula HCIZ:
$$
\int e^{\alpha\tr UAU^\dag B}d_*U=c\frac{\det \left[ e^{a_ib_j}\right]_{i,j}}{\prod_{i>j}(a_i-a_j)(b_i-b_j)}
$$

Indeed, we use (\ref{C-Littlewood}) where $\bpow^1=(1,0,0,\dots)=:\bpow_\infty$ and 
$\bpow^2=(p^{(2)}_1,p^{(2)}_2,\dots)$ with $p^{(2)}_m=\tr\left( (UAU^\dag B)^m \right)$;~thus,
$$
e^{\alpha\tr UAU^\dag B}=
\sum_\lambda \alpha^{|\lambda|} s_\lambda(\bpow_\infty) s_\lambda(UAU^\dag B)
$$
which gives
$$
\int e^{\alpha\tr UAU^\dag B}d_*U=\sum_\lambda \alpha^{|\lambda|}
\frac{s_\lambda(\bpow_\infty)s_\lambda(A) s_\lambda(B)}{s_\lambda(\mathbb{I}_N)}
=\sum_{\lambda} \frac{\alpha^{|\lambda|}}{(N)_\lambda}s_\lambda(A) s_\lambda(B)
$$
which has the form (\ref{tau(nXY)}). In a similar waym one evaluates
$
\int \left(1-\alpha UAU^\dag B\right)^{-a}d_*U
$
and a number of integrals; see,~for instance, \cite{O-2004-New}.
\er

The expectation values $\mathbb{E}_{0,1}\left\{\bpow_\Delta(AUBU^{-1})\right\}$ and 
$\mathbb{E}_{0,1}\left\{\bpow_{\Delta^a}(AU)\bpow_{\Delta^b}(BU\dag)\right\}$ 
can also be expressed in terms of Hurwitz numbers, but this requires a lot 
of space, and we will not do this; see the last section \cite{NO2020} for details.

\section{Our Models. Products of Random Matrices We Choose}
\vspace{-6pt}

\subsection{Preliminary. On the Products of Random Matrices}

In a number of articles the spectral correlation functions of certain products of were
were~studied. These are products

\be\label{zzdag-zzdag}
 Z_1Z_1^\dag \cdots Z_n Z_n^\dag,
\ee
\be\label{zz-dag}
  Z_1\cdots Z_n(Z_1\cdots Z_n)^\dag,
\ee  
or, a pair of products
\be\label{z,z-dag}
  Z_1\cdots Z_n,\quad (Z_1\cdots Z_n)^\dag;
\ee
see \cite{Ak1,Ak2,AkStrahov}.

Similar products in which complex matrices are replaced by unitary and certain generalizations have also been 
considered \cite{Alexandrov,Chekhov-2014,ChekhovAmbjorn,Chekhov-2016,O-TMP-2017,O-links}.

Below we suggest a generalization of these models (see Examples \ref{E8} and \ref{E6} respectively for 
(\ref{zzdag-zzdag})--(\ref{z,z-dag})). We call it matrix models related to dessin d'enfants,
more precisely, to the so-called clean dessin d'enfants (see \cite{Amburg}).

This is a child's drawing of a "constellation," like the Greek constellation, which is painted in the sky with 
stars, and the sky is a surface with a chosen Euler characteristic.

See below for a more accurate description.

As a part of dessin d'enfants we will need the following modification:

(1) We shall consider mixed ensembles, and in what follows, $X_{\pm i}$ denotes either $Z_{\pm i}$, where
$Z_{-i}=Z_{i}^\dag$, or $U_{\pm i}$, where $U_{-i}=U_i^\dag$.

(2) We need additional data, which we call source matrices $C_{\pm 1},\dots, C_{\pm n}\in \G$. Each random
matrix $X_i$ enters only in combination $X_iC_i$:
\be\label{X-XC}
X_i\,\to\, X_iC_i
\ee 

Such combinations are very helpful (in particular, this makes it possible to consider rectangular random matrices).
We shall use the notion of the \textit{dressed} source matrices 
$C_i,\,i=\pm 1,\dots,\pm n$ with the notation $\Dr_X[*]$:
\be\label{dressing}
\Dr_X[f(C_1,C_{-1}\dots,C_{n},C_{-n})]:=f(X_1C_1,X_{-1}C_{-1}\dots,X_nC_{n},X_{-n}C_{-n})
\ee
where $f$ is any function.
(The letter ${\cal {L}}$ is used to remember
that it is the dressing from the left side.)

The matrices $\{C_i\}$ we call the source matrices which play the role of coupling constants in the matrix models 
below.

\subsection{Models Obtained from Graphs (Geometrical View)\label{geom-view}}

Consider a  connected graph $ \Gamma $ on an orientable connected surface $ \Sigma $ without boundary with Euler
characteristic $ \e $.
We require such properties of the graph: 

(1) its edges do not intersect. For example, the edges of the graph in Figure  \ref{fig:figure1}a do not intersect: the fact is 
that the graph is drawn on a torus, and not on a piece of paper. 

(2) if we cut the surface $\Sigma$ along the edges of the graph, then the surface will decompose into disks (more 
precisely: into pieces homeomorphic to disks) (see Figure \ref{fig:figure5}a,b as an example).

As an example, see Figure \ref{fig:figure1}, which contains all such graphs with two edges.

  \begin{figure}[h] 
\centering
\includegraphics[scale=1]{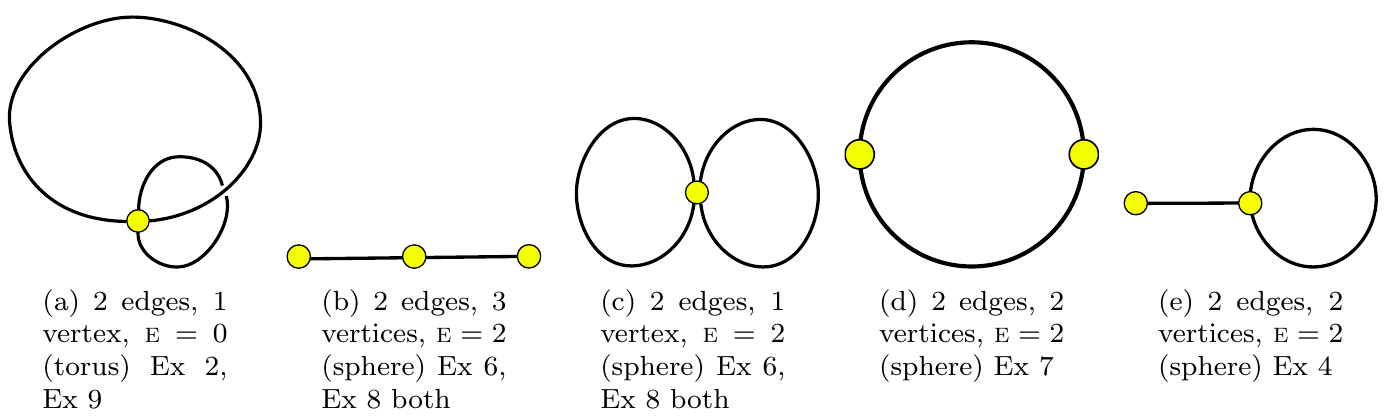}
\caption{All possible graphs with 2 edges (two matrix models). Graph (\textbf{b}) is dual to (\textbf{c}).
Ex means an example below.}
\label{fig:figure1}
\end{figure}

Such a graph is sometimes called a (clean) dessin d'enfants\hl{ (the term dessin 
d'enfants without the additional "clean" serves for such a graph with a bipartite structure), } 
sometimes---a map \cite{ZL}.

Let our graph have $\f$ faces, $n$ edges and $\V$ vertices; then $ \e = \V- n + \f $.

We number all stars (i.e., all vertices of the graph $\Gamma$) with numbers from $1$ to $\V$ and
all faces of $\Gamma$ with numbers from $1$ to $\f$ and all edges of $\Gamma$ with numbers from $1$ to $n$ 
in any way.

We will slightly expand the vertices of the graph and turn them into the \textit{small disk}, which sometimes 
for the sake of visual clarity we will call \textit{stars}.

In this case, the edges coming out of the vertex will divide the
border of a small disk into segments. We orient the boundary of each such segment  clockwise, and the 
segment can be represented 
by \mbox{an arrow} that goes from one edge to another; see the Figures \ref{fig:figure2}a,b,d and
\ref{fig:figure5}a,c as examples.
Our graph will have a total 
of $2n$ arrow segments.

It is more correct to assume that the edges of the graph are very thin ribbons; that is, they have finite 
thickness.  
That is, the edge number $|i|$ has two sides, one of which we number with the number~$i$, and the other with the 
number $-i$ 
(this choice is arbitrary but fixed).
It would be more correct to depict each  edge of  $\Gamma$ numbered by $|i|$
 in the form of a ribbon, the sides of which are two oppositely directed arrows; one arrow has a number $i$, 
 and the second $-i$. However, with the exception of Figure \ref{fig:figure5}, we did not do that so as not to 
 clutter up the drawings.
 \vspace{-12pt}

 \begin{figure}[h] 
\centering
\includegraphics[scale=1]{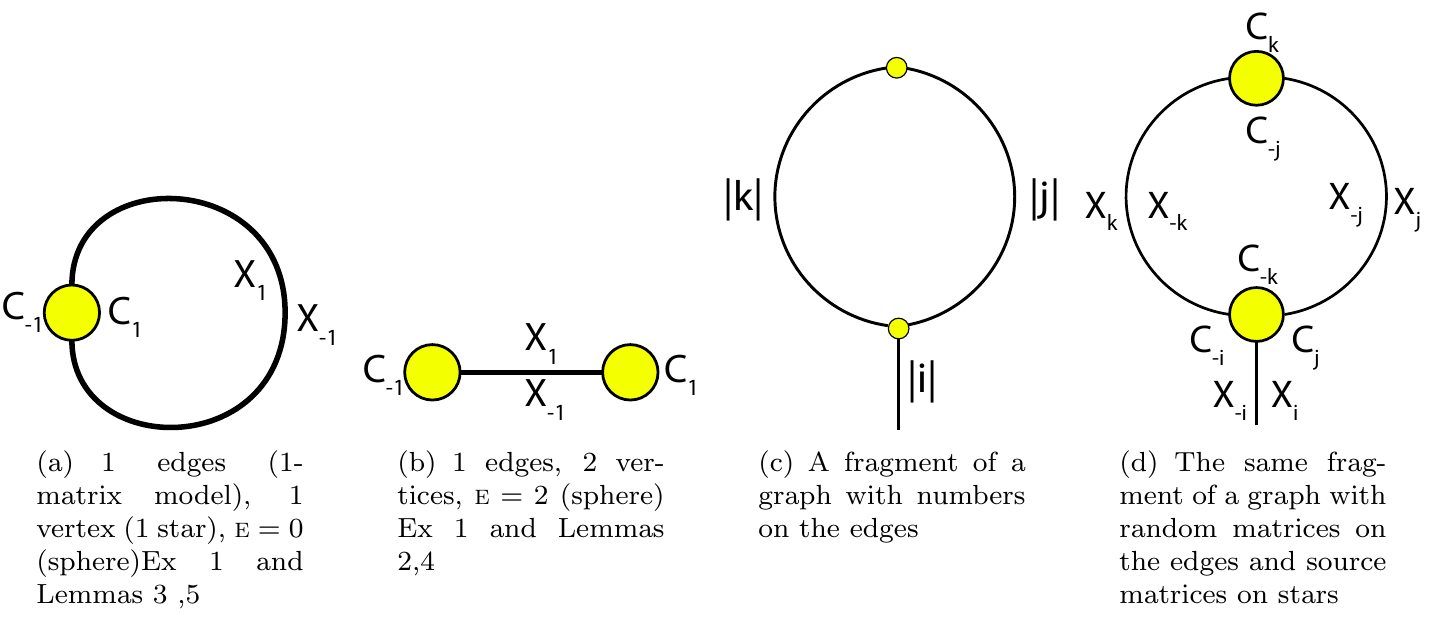}
\caption{Decorated graphs.}
\label{fig:figure2}
\end{figure}
\vspace{-12pt}

\begin{figure}[h] 
\centering
\includegraphics[scale=1]{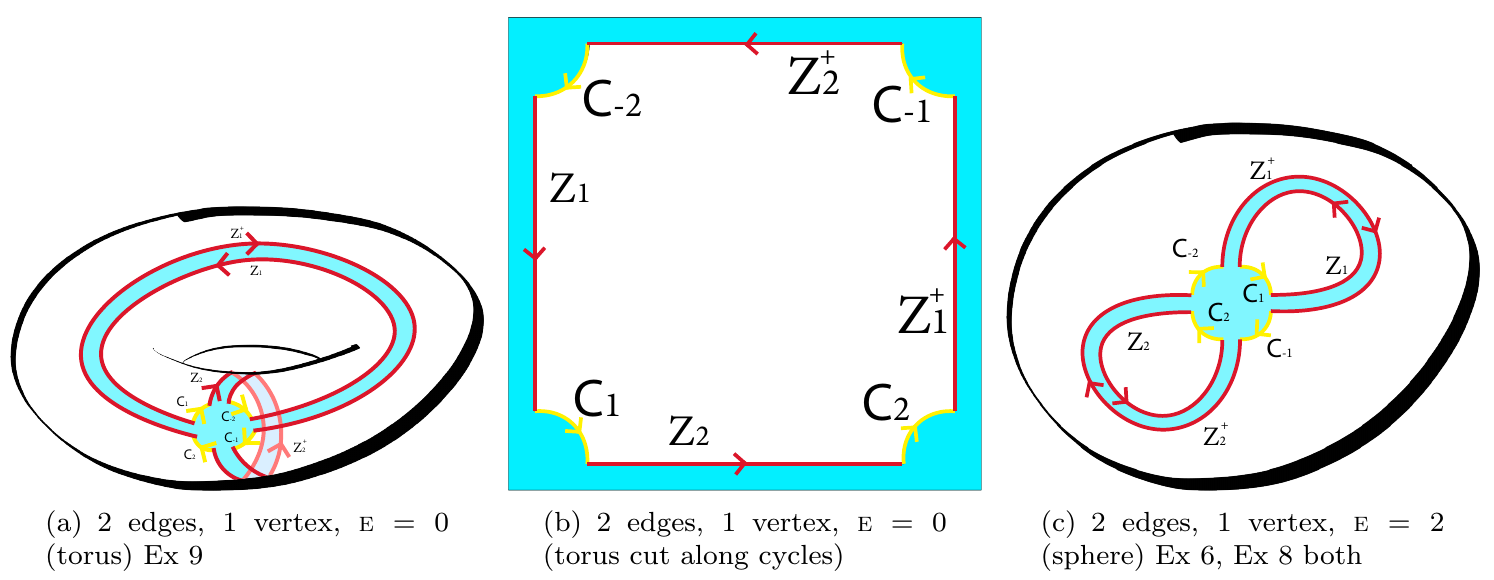}
\caption{Graphs with 2 edges (two matrix models) where (\textbf{a}) is the zoom of (\textbf{a})
in  Figure \ref{fig:figure1} and (\textbf{c}) is the zoom of (\textbf{c}) in  Figure \ref{fig:figure1}.}
\label{fig:figure5}
\end{figure}

Now, each arrow of the side of the edge rests with its end against the arrow-segment---as if continued by the arrow-segment. 
We assign the same number $i$ ($i=\pm 1,\dots,\pm n$) to the side of the edge $ \Gamma $ and to the segment of the small disk, which
continues this side while traversing the face in the positive direction.
  
Additionally, to the number $i$ ($i=\pm 1,\dots,\pm n$) we attribute the product of the $N\times N$ matrices
$$
i\,\to\,X_iC_i
$$
where $ C_i $ is assigned to the $ i $ segment of a small disk and will be called \textit{source matrix},
and $ X_i $ is assigned to the arrow $ i $, which is the side of the edge $ | i | $ of the graph $ \Gamma $;
see Figure \ref{fig:figure2} for an illustration.

Let us use the following numbering. If one side of the edge of the tape $ | i | $ is labeled $ i $, then the other 
side of
the same edges is labeled $ -i $. (It does not matter which side of the edge $|i|$ we assign the number $i$ to, and 
which we assign the number $-i$ to, but we should fix the numbering we have chosen).
The~two sides of an edge of the graph $ \Gamma $ are actually arrows pointing in opposite directions.
(In order not to complicate the drawing, we do not depict these arrows in our figures, except for  Figure \ref{fig:figure5}a,b). 
We~draw the arrows so that they point in the positive 
direction when bypassing the boundary of the face (that is, when bypassing the face in the 
counterclockwise direction).
Each arrow, say $i$ (which~either a positive or a negative number), ends at the boundary of the small disk 
(star) and with further bypassing 
of the face we pass along the segment of the boundary of the star. We assign the same number  $i$ to this~segment.
Acting in this way, we give numbers to all segments of all small disks. It is easy to understand that all the arrows 
on the segments of the boundary of any 
small disc are directed clockwise (i.e., in the negative direction) when passing along the boundary of each small disk.

We attribute a sequential set of numbers to each star as follows.
(this set is defined up to \mbox{a cyclic} permutation, and we will call it a cycle 
associated with the star).
Examples: These are numbers $(1,-1) $ 
assigned to the star (yellow small disk) in  Figure \ref{fig:figure2}a. These are
the numbers $(k,-j)$ that we attribute to the star on the top in the Figure \ref{fig:figure2}c and the numbers $(-k,j,-i)$
that we attribute to the lower star in the same figure.
 Additionally, to each cycle we ascribe the related cyclic products:
$$
(1,-1)\,\leftrightarrow\,C_1C_{-1},\quad 
$$
to the star (yellow small disk) in Figure \ref{fig:figure2}a.  As for the stars in Figure 
\ref{fig:figure2}c we obtain
$$
(k,-j)\,  \leftrightarrow \,C_kC_{-j}\,\quad (-k,j,-i)\,  \leftrightarrow \,C_{-k}C_{j}C_{-i}
$$

Each cycle product we will call the \textit{star's monodromy}. 
As a result, a star's monodromy is a product of matrices that are attributed to 
arrow segments, taken in the same sequence in which arrows follow each other when moving around a small disk clockwise.
We number the stars in any way by numbers from $1$ to $\V$.
The
monodromy of a star $i$ will be denoted by the letter $W_i^*$.

In addition to the edges and in addition to the vertices, we number the 
faces of the graph $\Gamma$ and the corresponding face  monodromies with numbers from $1$ to $\f$.
The cycles corresponding to the face $i$ will be denoted by $f_i$ and defined as follows.
When going around the face border in the positive~direction, namely, counterclockwise 
for an ordinary face (or counterclockwise if the face contains infinity), we~collect segment numbers of small disks; for a face with a number $i$, this ordered collection of ordered 
numbers is $f_i$. 
As in the case of stars, we build cyclic products  $f_i\leftrightarrow W_i$. Examples:
$$
f_1=(1)\,\leftrightarrow\,C_1=W_1 \quad{\rm and}\quad f_2=(-1)\,\leftrightarrow\,C_{-1}=W_2 
$$
for two faces in Figure  \ref{fig:figure2}a.

We also introduce \textit{dressed} monodromies:
$$
{\cal L}_X(W_1)=X_1C_1,\quad {\cal L}_X(W_{-1})=X_{-1}C_{-1}
$$

Additionally, for the face in Figure \ref{fig:figure2}c:
$$
f_1= (-j,-k)  \, \leftrightarrow \,C_{-j}C_{-k}=W_1\quad \leftrightarrow \Dr_X(W_1)=
X_{-j}C_{-j}X_{-k}C_{-k}
$$

Thus, we have two sets of cycles and two sets of monodromies: vertex cycles and vertex~monodromies
\be
\sigma_i^{-1}\,\leftrightarrow \, W_i^*,\quad i=1,\dots,\V
\ee
and face cycles and face monodromies:
The cycles corresponding to the face $i$ will be denoted by $f_i$; we have
\be\label{face-monodromy}
f_i\,\leftrightarrow \, W_i \,\leftrightarrow \, \Dr_X[W_i],\quad i=1,\dots,\f
\ee

Let us note that both cycles and monodromies are defined up to the cyclic permutation.

\br
Important remark. 
Please note that each of the matrices $C_i,\,i=\pm 1,\dots,\pm n$ enters the set of monodromies 
$W_1,\dots,W_f$ once and only once. Accordingly, each random matrix $X_i,\,i=\pm 1,\dots,\pm n$ is included once 
and only once in the set of dressed monodromies $\Dr_X[W_1],\dots,\Dr_X[W_f]$. This determines the class of matrix 
models  that we will consider
and which we will call matrix models of dessin d'enfants.

\er

\br
In the description of maps (the same, of clean dessin d'enfants) the following combinatorial relation is well-known;
see the wonderful texbook \cite{ZL}, Remark 1.3.19:
\be\label{graph-comb}
\left( \prod_{i=1}^n \alpha_i \right)\left(\prod_{i=1}^\f f_i  \right)=\left(\prod_{i=1}^\V \sigma_i^{-1}  \right)
\ee
where each of $\alpha_i$, $f_i$ and $\sigma_i$ belongs to the permutation group $S_{2n}$. Here $f_i$ are face cycles,
$\sigma_i$ are vertex cycles and~$\left( \prod_{i=1}^n \alpha_i \right)$ is an involution without fixed points;
each $\alpha_i$ transposes $i$ and $-i$. Since $\alpha_i^2=1$ we can also~write
\be\label{graph-comb*}
\left( \prod_{i=1}^n \alpha_i \right)\left(\prod_{i=1}^\V \sigma_i^{-1}  \right)=\left(\prod_{i=1}^\f f_i  \right)
\ee

The reader can check this relation for the pair of the simplest examples
where $n=1$.

For any given set of the face cycles $ f_1, \dots, f_\f $, the relation (\ref{graph-comb}) allows you to 
  \textit{uniquely reconstruct} the set of the vertex
cycles $ \sigma_1^{- 1}, \dots, \sigma_V^{-1} $ (the power $-1$ is related to the negative (clockwise) counting
of numbers).
\er

We get 
\bp
\label{simplest-intgral}

\be\label{d=1}
  \hbar^{-n_1}\E_{n_1,n_2}\left\{ \prod_{i=1}^\f  \Dr_X\left[\tr W_i\right]\right\}=N^{-n_2}\prod_{i=1}^{\V} \tr W^*_i 
\ee
and
\be\label{d=1*}
  \hbar^{-n_1}\E_{n_1,n_2}\left\{ \prod_{i=1}^\V  \Dr_X\left[\tr W_i^* \right]\right\}=N^{-n_2}\prod_{i=1}^{\f} \tr W_i \,,
\ee
where we use the notation (\ref{dressing}) and where the parameters $n_1,n_2,\hbar$ were defined in 
Section \ref{MixedEnsembles}.
\ep

We note that each trace is a sum of monomials.
Let us note that each of $X_i,\,i=\pm 1,\dots,\pm n $ enters only once in each monomials term insider
of the integral in the left and side of (\ref{d=1}) and of (\ref{d=1*}).

There are two ways to prove these relations: algebraic and geometrical ones. 
In short, the sketches of the proof are are as follows. 
We should
notice that $\bpow_{(1)}(*)=s_{(1)}(*)$
and apply Lemmas \ref{Lemma-s-Z}--\ref{Lemma-s-s-U} while
having in mind that this is actually the cut-or-join procedure (\ref{duality}) stated below in Section \ref{alg}.
In this case 
the difference between Lemmas \ref{Lemma-s-Z}, \ref{Lemma-s-s-Z} and  Lemmas \ref{Lemma-s-U}, \ref{Lemma-s-s-U}
is only in the power of the factor $N$, since $(N)_\lambda=N$ in case $\lambda=(1)$.

Geometric way: We use the relation
$$
\hbar^{-1}\E_{n_1,0}
\left\{ (Z_i)_{a,b}(Z_j^\dag)_{b',a'}  \right\}= \delta_{i,j}\delta_{a,a'}\delta_{b,b'}
$$
and calculate monodromies along faces of the graph $\Gamma$ as described in the beginning of this section.

\br
From geometrical point of view in this way we create a surface by gluing the polygons related to the dressed 
face monodromies; see \cite{NO2020}.
\er

\br
You may notice some similarities between formulas (\ref{d=1}), (\ref{d=1*}) and formulas 
(\ref{graph-comb}), (\ref{graph-comb*}). Additionally, there is. One can say that the role of involution $\alpha_i$ is played by 
the integration  over the matrix $X_i$ (by the Wick coupling of $X_i$ and $X_{-i}$).
\er

We draw attention to two facts.
 \begin{itemize}
  \item The answer (i.e., the left-hand side of \eqref{d=1}) depends only on the spectrum of star monodromies
  $$
  {\rm Spect}(W_1^*),\dots,{\rm Spect}(W_\V^*).
  $$
  \item The answer does not depend on how exactly in our model we distribute the matrices 
$\{Z\}$ and $\{U\}$ to dress the source matrices---only two numbers $n_1$ and $n_2$ are important. 
For example, it~does not matter, in the right-hand side of (\ref{d=1}),
whether we dress $\Dr_X[C_i,C_{-i}]=U_iC_i,U_{-i}C_{-i}$ and $\Dr_X[C_j,C_{-j}]=Z_jC_j,Z_{-j}C_{-j}$ or 
$\Dr_X[C_i,C_{-i}]=Z_iC_i,Z_{-i}C_{-i}$ and $\Dr_X[C_j,C_{-j}]=U_jC_j,U_{-j}C_{-j}$.
  
 \end{itemize}

\subsection{Our Models. Algebraic View \label{alg}}

You can forget about graphs and dessin d'enfants and set them out differently.

We have a set of random matrices 
$$
X_{\pm 1},\dots, X_{\pm n},\quad X_{-i}=X_i^\dag
$$
and there are as many source matrices
\be\label{alphabet}
C_{\pm 1},\dots, C_{\pm n},
\ee
and we want to consider expectation value of 
 the products of these matrices. We want each matrix $X_i$ to come in combination with  
the source matrix labeled with the same $i$; see (\ref{X-XC}) and (\ref{dressing}).

The set (\ref{alphabet}) we will call the alphabet of pairs
and the matrices $\{C_i\}$ can be considered letters of the alphabet.
 
The products of these matrices can be considered as words constructed from letters of the alphabet of pairs. 
Consider a group of words
$$
W_1,\dots,W_\f
$$
with the following condition: each of matrices $C_i,\,i=\pm 1,\dots,\pm n$ is included \textit{once and only once} in 
one of the words of this group.

In addition, we ask the following property: In this group of words there is no such subset of words that could be 
constructed from the alphabet of pairs with a smaller set of pairs of letters (in~other words, we will consider
only connected groups of words. The connected group of words will be related to the connected graphs $\Gamma$).

Each word $W_i$ can be associated with an ordered set of numbers $f_i$ consisting of
 the numbers of the matrices included in the word: $f_i\leftrightarrow W_i$. Recall that each word
  $W_i$ (and respectively $f_i$) is defined up to a cyclic permutation.

 There is a one-to-one correspondence between dessin d'enfants with $n$ edges and $\f$ faces and word sets 
 constructed in this way. Based on the set $W_1,\dots,W_\f$, the surface $\Sigma$ is built uniquely. To~do~this, 
 the procedure is~as follows. First, for each  word, say $W_i$, we associate the polygon with the edges, numbered by the numbers of the 
 matrices  in the product, namely, by the set $f_i$.
By going around the boundary of the polygon counterclockwise, we assign the numbers of the matrices to the edges 
if we write the word from left to right.
We get a set of $\f$ polygons. After that, we glue these polygons so that the side with the number $i$ sticks together 
with the side with the number $-i$.
We assign an orientation to each polygon, considering its edges arrows, showing the direction of counterclockwise. 
We glue the edges so that the beginning of the arrow $i$ sticks together with the end of the arrow $-i$.
We get the surface $ \Sigma $ and on it is the graph $ \Gamma $, whose edges are glued from two oppositely directed 
arrows (ribbon edge).

To determine the Euler characteristic of $\Sigma$ we need to know the number of vertices of the graph $\Gamma$.

{Cut-or-join procedure.}
Purely algebraically, one should act like this.

Consider $ W_1 \otimes W_2 \otimes \cdots \otimes W_\f $
(the order in this tensor product is not important) and the set of involutions $ T_i, \, i = 1, \dots, n $, which act on this tensor product as follows.
Each involution of $ T_i $ does not affect those $ W_a $ that contain neither $ C_i $ nor $ C_{- i} $. Two situations are possible. (I) The matrices $ C_i $ and $ C_{- i} $ are in the same word, say, the word $ W_a $. How then can we
to rearrange the matrices with the word cyclically, to bring it to the form $ C_i X C_{- i} Y $, where $ X $ and 
$ Y $ are some matrices. (II) The matrices $ C_i $ and $ C_{- i} $ are included in different words; in this case 
we will write these two words as $ C_iX $ and $ C_{- i} Y $.
As~you can see, the action of involutions $T_i$ corresponds to taking the integral in Lemmas
\ref{Lemma-s-Z}--\ref{Lemma-s-s-U}.
Then
\bea
T_i\left[ \cdots \otimes C_i X C_{-i}Y \otimes \cdots \right]=\cdots \otimes C_iX \otimes C_i Y \otimes \cdots
\\
T_i\left[ \cdots  \otimes C_iX \otimes C_i Y \otimes \cdots \right] =\cdots \otimes C_i X C_{-i}Y \otimes \cdots 
\eea

It is easy to see that involutions commute: $ T_i [T_j [*]] = T_j [T_i [*]] $.

The transformation 
\be\label{duality}
W_1,\dots, W_\f\, \leftrightarrow \, W_1^*,\dots, W_\V^*
\ee
can be obtained purely algebraically in $n$ steps. We call these two sets of matrices \textit{dual sets}.

Proposition:
\bea\label{cut-or-join}
\prod_{i=1}^n T_i\left[ W_1\otimes W_2 \otimes \cdots \otimes W_\f \right]=
W_1^*\otimes \cdots \otimes W_\V^*
\\
\prod_{i=1}^n T_i\left[ W_1^*\otimes \cdots \otimes W_\V^* \right]=W_1\otimes W_2 \otimes \cdots \otimes W_\f
\eea

\br
Actually this is a manifestation of the Equation (\ref{graph-comb}) where $\alpha_i$ is related to $T_i$.
An involution without fixed points $ \prod_{i = 1}^n T_i $ takes the graph $ \Gamma $ to the graph dual to it.
\er

Here are some examples (\ref{duality}) with explanations about the geometric representation in Figures \ref{fig:figure1}--\ref{fig:figure4}
and in some other graphs:

\bx\label{1} Suppose $W_1=C_1$, $W_2=C_{-1}$. In 1 step we get:
$$
 W_1\otimes W_{2}=C_1\otimes C_{-1} \leftrightarrow   C_1C_{-1} =W_1^*
$$

Therefore $\V=1$ and as we have $\f=2,n=1$; then the Euler characteristic is $\e=\f-n+\V=2$.
\mbox{The right} hand side can be obtained from Figure \ref{fig:figure2}a as the face monodromies while
the left-hand side can be obtained as star monodromies there.
The left-hand side contains also the face monodromies in Figure \ref{fig:figure2}b; and
the right-hand side can be obtained as star monodromies there. (This is because graphs (a) and (b) are dual ones.)
\ex
\bx\label{2}
In 2 steps:
$$
 W_1= C_1C_2C_{-1}C_{-2}\leftrightarrow  C_1C_{-2}C_{-1}C_{2} = W_1^*
$$

Therefore $\V=1$, and as we have $\f=1,n=2$, then the Euler characteristic is $\e=\f-n+\V=0$ (torus).
The~right-hand side can be obtained from Figure \ref{fig:figure1}a as the face monodromies while
the left-hand side can be obtained as star monodromies there; see the more detailed Figure \ref{fig:figure5}a.
The dual graph looks like the same as the graph on the torus: there is 1 vertex and 1 face; the left-hand side
plays the role of the face monodromy for the dual graph. This a particular case of Ex 9.
\ex
\bx\label{3} 
In two steps:
$$
W_1\otimes W_2\otimes W_3=
C_{-1}C_{-2}\otimes C_1\otimes C_{2}\leftrightarrow   C_1C_{-1}C_{2}C_{-2} =W_1^*
$$

Therefore $\V=1$ and as we have $\f=3,n=2$,  the Euler characteristic is $\e=\f-n+\V=2$
The right-hand side can be obtained from Figure \ref{fig:figure1}b as the face monodromies while
the left-hand side can be obtained as star monodromies there. 
The left-hand side contains also the face monodromies in Figure \ref{fig:figure1}c (see  Figure \ref{fig:figure5}c for more details);
the right-hand side can be obtained as star monodromies there. (This is because graphs \ref{fig:figure1}b and \ref{fig:figure1}c are dual ones).
\ex

\bx\label{4}   In 2 steps: 
$$
  C_{1}C_2C_{-1}\otimes C_{-2} \leftrightarrow   C_1C_{-2}C_{2}\otimes C_{-1}
$$

Therefore $\V=2$ and as we have $\f=2,n=2$,  the Euler characteristic is $\e=\f-n+\V=2$.
The right-hand side can be obtained from Figure \ref{fig:figure1}e as the face monodromies while
the left-hand side can be obtained as star monodromies there.
The left-hand side contains also the face monodromies of the dual graph which is the graph
of the same type (the loop from which the segment sticks out).

\bx\label{5}
 In 5 steps: 
 \begingroup\makeatletter\def\f@size{8}\check@mathfonts
\def\maketag@@@#1{\hbox{\m@th\fontsize{10}{10}\selectfont \normalfont#1}}%
$$
W_1\otimes W_2\otimes W_3=
C_1C_2C_3C_4\otimes C_{-3}C_{-2}C_5\otimes C_{-5}C_{-1}C_{-4}\leftrightarrow  
C_2C_{-2}C_{-5}\otimes C_{2}C_{-3}\otimes C_5C_3C_{-4} \otimes C_4C_{-1}
=W_1^*\otimes W_2^*\otimes W_3^*\otimes W_4^*
$$
\endgroup

Therefore $\V=4$ and as we have $\f=3,n=5$,  the Euler characteristic is $\e=\f-n+\V=2$
(the same is obtained from Figure \ref{fig:figure4}d).
\ex

\ex
\bx\label{6}
  In $n$ steps:
$$
C_{1}C_2\cdots C_{n}C_{-n} \cdots C_{-1} \leftrightarrow
C_{1}C_{-2}\otimes C_2 C_{-3}\otimes \cdots \otimes C_{n-1}C_{-n}\otimes C_n \otimes C_{-1}
$$

Therefore $\V=n+1$ and as we have $\f=1$,  the Euler characteristic is $\e=\f-n+\V=2$
The left-hand side is related to $\Gamma$ in form of a chain with $n+1$ vertices, $n$ edges drawn on the sphere
(see Figure \ref{fig:figure1}b for the $n=2$ chain and see Figure \ref{fig:figure3}a for the $n=3$ chain).
In case each source matrix is the identity ones, this is related 
to (\ref{zz-dag}) which is a rather popular product. 
The right side is represented by such a graph: there are $n$ circles; each subsequent one decreases and is inside 
the previous one. They  all touch at one point (the vertex); see Figure \ref{fig:figure3}b.
\ex
\bx\label{7}
  In $n$ steps: 
$$
  C_{1}C_2\cdots C_n\otimes C_{-n}C_{1-n}\cdots C_{-1} \leftrightarrow   
  C_{1}C_{-2}\otimes C_2 C_{-3}\otimes \cdots \otimes C_{n-1}C_{-n}\otimes C_n C_{-1}
$$

Therefore $\V=n$, and as we have $\f=2$, then the Euler characteristic is $\e=\f-n+\V=2$.
(the case $n=2$ is obtained from Figure \ref{fig:figure1}d). The left-hand side is related to $\Gamma$ in form of 
the chain from the previous case
where we replace $n$ by $n-1$ and connect its ends by $n$-th edge. Namely, it is $n$-gon drawn on the sphere.
The right-hand side is related to the graph where two vertices are connected by $n$ edges.
In case $n=4$,  the right-hand side can be obtained from Figure \ref{fig:figure3}d as the face monodromies while
the left-hand side can be obtained as star monodromies there.
The left-hand side contains also the face monodromies in Figure \ref{fig:figure3}c;
the right-hand side can be obtained as star monodromies there. (Indeed, graphs in
Figure \ref{fig:figure3}d and in Figure \ref{fig:figure3}c are dual ones.)
\ex
\bx\label{8}
  In $n$ steps:
$$
 C_{1}C_{-1}C_2C_{-2}\cdots C_{n}C_{-n}
\leftrightarrow   C_{-1}C_{-2}\cdots C_{-n}\otimes C_1\otimes C_2\otimes \cdots \otimes C_n 
$$

Therefore $\V=n+1$, and as we have $\f=1$, then the Euler characteristic is $\e=\f-n+\V=2$
 The left-hand side is related to a star-graph $\Gamma$; the right-hand side---to a petal graph (as an example take $n=3$ and look at Figure \ref{fig:figure4}a and dual Figure \ref{fig:figure4}b.
   \begin{figure}[h] 
\centering
\includegraphics[scale=1]{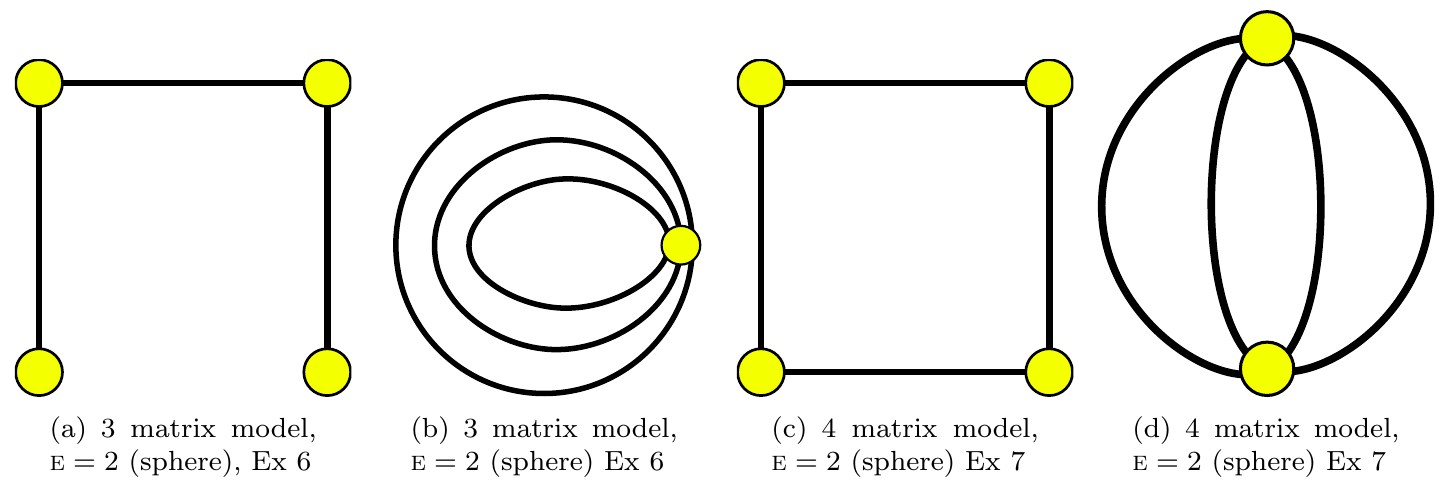}
\caption{Graphs drawn without decoration. 
Graph (\textbf{a}) is dual to (\textbf{b}). Graph (\textbf{c}) is dual to (\textbf{d}).}
\label{fig:figure3}
\end{figure}
\vspace{-12pt}

  \begin{figure}[h] 
\centering
\includegraphics[scale=1]{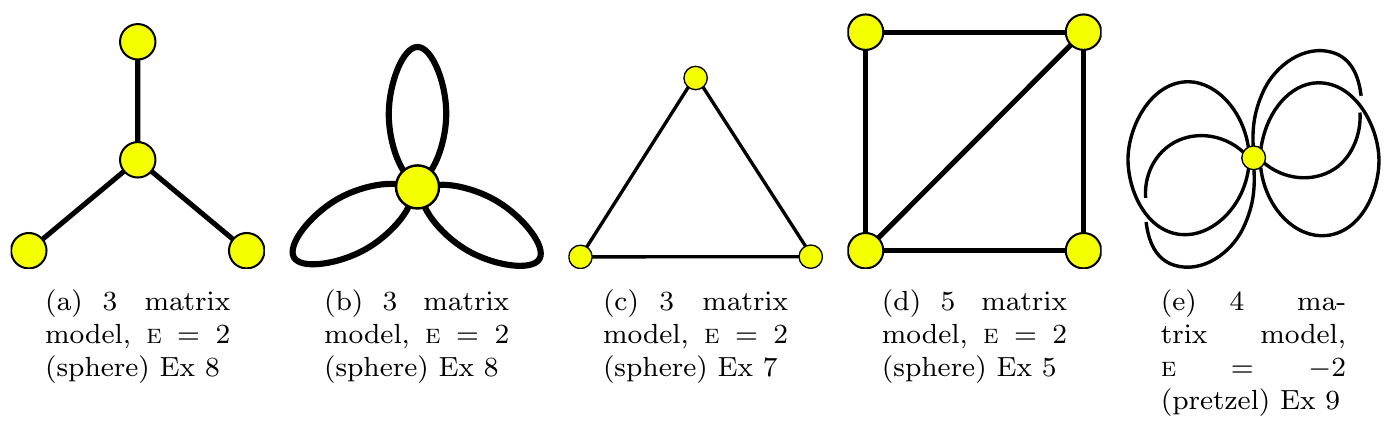}
\caption{Graphs drawn without decoration. Graph (\textbf{a}) is dual to (\textbf{b}).}
\label{fig:figure4}
\end{figure}
 After the dressing procedure, in case all source 
 matrices are chosen to be identical ones we get a popular product (\ref{zzdag-zzdag}).

\ex
\bx\label{9}
 Suppose $W_1=C_{a_1}C_{b_1}C_{-a_1}C_{-b_1}\cdots  C_{a_g}C_{b_g}C_{-a_g}C_{-b_g}$,
where we number source matrices according~$a_i,b_i$-cycles structure of a Riemann surface of genus $g$.
In $n=2g$ steps we obtain 
$$
C_{a_1}C_{b_1}C_{-a_1}C_{-b_1}\cdots  C_{a_g}C_{b_g}C_{-a_g}C_{-b_g}\leftrightarrow 
C_{-a_1}C_{b_1}C_{a_1}C_{-b_1}\cdots  C_{-a_g}C_{b_g}C_{a_g}C_{-b_g}
$$

Therefore $\V=1$, and as we have $\f=1$, $n=2g$, then the Euler characteristic is $\e=\f-n+\V=2-2g$

The case $n=2$ yields a torus and was considered in Example \ref{2}. The case $n=4$ can be related to a
graph drawn on a pretzel; see Figure \ref{fig:figure4}e. The left-hand side and the right-hand sides
are related to dual graphs each of which has one face and one vertex and is drawn on a surface with genus $g$
whose edges are $a_i,b_i$ cycles.

\ex
\bx\label{10}
 In $n=6$ steps we obtain 
$$
C_{-1}C_{-2}C_{-3}\otimes C_{-5}C_{-4}C_{1}\otimes C_{-6}C_{5}C_{2} \otimes C_{6}C_{4}C_{3} \leftrightarrow 
C_{1}C_{4}C_{-3}\otimes C_{-5}C_{-1}C_{2}\otimes C_{-6}C_{-2}C_{3} \otimes C_{-4}C_{5}C_{6}
$$

Therefore $\V=4$, and as we have $\f=4,n=6$, then the Euler characteristic is $\e=\f-n+\V=2$.
This can be represented as a tetrahedron inscribed in a sphere. The tetrahedron graph is self-dual.

\ex

\section{Expectation Values of Matrix Products}

Lemmas \ref{Lemma-s-Z}--\ref{Lemma-s-s-U}
are generalized to the case of mixed ensembles; see 
Propositions \ref{prop-Schur} and \ref{prop-power} below.

Propositions \ref{prop-Schur}-\ref{prop-power} are extended versions of statements 
studied in \cite{NO2019,NO2020,NO2020F}.

\bp\label{prop-Schur}
\label{Schur-expectation} Consider ensemble (\ref{mixed}).
Consider dual sets $W_1,\dots, W_\f$ and $W^*_1,\dots,W^*_\V$ of (\ref{duality}).
For any given set of partitions $\lambda^1,\lambda^2,\dots,\lambda^\f =\lambda\in \Upsilon_d$, we have
\begin{myequation}
\label{E-Schur}
 \E_{n_1,n_2}\left\{ \Dr_X\left[ s_{\lambda^1}\left( W_1 \right) \cdots
 s_{\lambda^\f}\left( W_\f \right)\right]\right\}  =
\delta_{\lambda^1,\lambda^2,\dots,\lambda^\f} \hbar^{n_1d}
\left(s_\lambda(\bpow_\infty)   \right)^{-n_1} 
\left(s_\lambda(\mathbb{I}_N)   \right)^{-n_2} 
s_{\lambda}\left(W_1^*\right)\cdots s_{\lambda}\left(W_\V^*\right),
\end{myequation}
  where  $ \delta_{\lambda^1,\lambda^2,\dots,\lambda^\f} $ is equal to 1 in case 
$\lambda^1=\lambda^2=\cdots =\lambda^\f$ and to 0 otherwise.

Similarly, for any set of partitions $\lambda^1,\lambda^2,\dots,\lambda^\V =\lambda\in \Upsilon_d$, we get
\begin{myequation}
\label{E-Schur1}
 \E_{n_1,n_2}\left\{ \Dr_X\left[ s_{\lambda^1}\left( W^*_1 \right) \cdots
 s_{\lambda^\f}\left( W^*_\V \right)\right]\right\}  =
\delta_{\lambda^1,\lambda^2,\dots,\lambda^\V} \hbar^{n_1d}
\left(s_\lambda(\bpow_\infty)   \right)^{-n_1} 
\left(s_\lambda(\mathbb{I}_N)   \right)^{-n_2} 
s_{\lambda}\left(W_1\right)\cdots s_{\lambda}\left(W_\f\right),
\end{myequation}
  where  $ \delta_{\lambda^1,\lambda^2,\dots,\lambda^\f} $ is equal to 1 in case 
$\lambda^1=\lambda^2=\cdots =\lambda^\f$ and to 0 otherwise.

\ep

The sketch of proof. The proof is based on the cut-or-join procedure (\ref{duality}) of Section \ref{alg} which
is the result of the step-by-step application of Lemmas \ref{Lemma-s-Z}--\ref{Lemma-s-s-U}.
 The different (geometrical) proof is based on the treating of the Wick rule as a way to glue surfaces from
 polygons and the use of (\ref{orth2}) and of (\ref{Mednyh}), (\ref{Mednyh-m}).

\begin{Corollary}
 
\label{prop-Schur-det}
\label{Schur-expectation-det} Consider ensemble (\ref{mixed}).
Consider dual sets $W_1,\dots, W_\f$ and $W^*_1,\dots,W^*_\V$ of (\ref{duality}), such~that~$\det W_i,\det W_j^*\neq 0$ for $i=1,\dots,\f$, $j=1,\dots,\V$.
Suppose that $\alpha_m=\max\left( \alpha^1,\dots,\alpha^\f \right)=:\alpha$, where $\alpha^1,\dots,\alpha^\f $
is a given set of nonnegative integers. Consider a given set of partitions $\lambda^i,\,i=1,\dots,\f$ and
denote $\lambda=\lambda^{(m)}$.
We have
\be\label{E-Schur-det}
 \E_{n_1,n_2}\left\{ \Dr_X\left[ \prod_{i=1}^{\f} s_{\lambda^i}\left( W_i \right)
 \det\left((W_i )^{\alpha_i}\right) \right]\right\}  =
 \ee
 \be
\delta_{\lambda^1+\alpha_1,\lambda^2+\alpha_2,\dots,\lambda^\f+\alpha_f} 
\left(\hbar^{-(|\lambda|+\alpha N)} s_\lambda(\bpow_\infty)    \frac{(N)_\lambda}{(N+\alpha)_\lambda}\right)^{-n_1}
\left(s_\lambda(\mathbb{I}_N)   \right)^{-n_2} 
\prod_{i=1}^\V
s_{\lambda}\left(W_i^*\right)\det\left((W_i^*)^\alpha  \right)
\ee
  where  $ \delta_{\lambda^1+\alpha_1,\lambda^2+\alpha_2,\dots,\lambda^\f+\alpha_\f} $ is equal to 1 in case 
for each $k=1,\dots,N$ we have $\lambda^{(1)}_k+\alpha_1=\lambda^{(2)}_k+\alpha_2=\cdots =\lambda^{(\f)}_k+\alpha_\f$,
and it is 0 otherwise.
\end{Corollary}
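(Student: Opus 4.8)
The plan is to reduce the Corollary to Proposition~\ref{prop-Schur} by absorbing each determinant factor into its Schur function via the multiplicative identity of Lemma~\ref{2-properties}. First I would regard each $\lambda^i$ as an $N$-tuple $(\lambda^i_1,\dots,\lambda^i_N)$ and use $s_{\lambda^i}(X)\det(X^{\alpha_i})=s_{\lambda^i+\alpha_i}(X)$. The point to stress is that both sides are polynomials in the entries of $X$ (Remark~\ref{s=polynom}) of the common degree $|\lambda^i|+\alpha_i N$ that agree on the Zariski-dense open set $\det X\neq0$; hence the relation is a polynomial identity and survives the substitution $X\mapsto\Dr_X[W_i]$ irrespective of whether the random dressed word is invertible. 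Therefore
\be
\Dr_X\!\left[\prod_{i=1}^{\f} s_{\lambda^i}(W_i)\det\!\big((W_i)^{\alpha_i}\big)\right]=\Dr_X\!\left[\prod_{i=1}^{\f} s_{\lambda^i+\alpha_i}(W_i)\right],
\ee
so the left-hand side of the Corollary is an instance of the left-hand side of~\eqref{E-Schur} with the shifted partitions $\mu^i:=\lambda^i+\alpha_i$.

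Next I would apply Proposition~\ref{prop-Schur} directly to the $\mu^i$. Its Kronecker delta $\delta_{\mu^1,\dots,\mu^\f}$ is nonzero exactly when all $\mu^i$ coincide as $N$-tuples, i.e.\ when $\lambda^{(1)}_k+\alpha_1=\cdots=\lambda^{(\f)}_k+\alpha_\f$ for every $k$, which is precisely the delta stated in the Corollary. On this locus I set $\mu:=\mu^{(m)}=\lambda^{(m)}+\alpha_m=\lambda+\alpha$, where $\lambda=\lambda^{(m)}$ and $\alpha=\alpha_m$ is the maximal shift. The choice of the maximal index is what makes the configuration consistent: $\mu_k=\lambda_k+\alpha\ge\alpha\ge\alpha_i$ forces every $\lambda^i=\mu-\alpha_i$ to be a genuine partition. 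Since $\mu$ adds $\alpha$ to each of the $N$ parts of $\lambda$, its weight is $|\mu|=|\lambda|+N\alpha$, fixing the factor $\hbar^{n_1|\mu|}=\hbar^{n_1(|\lambda|+\alpha N)}$.

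It then remains to re-express the three $\mu$-dependent factors produced by Proposition~\ref{prop-Schur} in terms of $\lambda$, again using Lemma~\ref{2-properties}. For the $\bpow_\infty$-factor I would substitute $s_{\lambda+\alpha}(\bpow_\infty)=s_\lambda(\bpow_\infty)\frac{(N)_\lambda}{(N+\alpha)_\lambda}$, which combined with the $\hbar$-power assembles exactly the bracket $\big(\hbar^{-(|\lambda|+\alpha N)}s_\lambda(\bpow_\infty)\frac{(N)_\lambda}{(N+\alpha)_\lambda}\big)^{-n_1}$. For the identity-matrix factor I would use $s_{\lambda+\alpha}(\mathbb{I}_N)=s_\lambda(\mathbb{I}_N)$, leaving $\big(s_\lambda(\mathbb{I}_N)\big)^{-n_2}$ unchanged. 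Finally, for each star monodromy I would run the shift identity backwards, $s_{\lambda+\alpha}(W_i^*)=s_\lambda(W_i^*)\det((W_i^*)^\alpha)$, which is legitimate since $\det W_i^*\neq0$ is assumed, producing $\prod_{i=1}^\V s_\lambda(W_i^*)\det((W_i^*)^\alpha)$. Collecting the four pieces reproduces the right-hand side of~\eqref{E-Schur-det} verbatim.

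The constant bookkeeping in the last step is routine. The one point that genuinely needs care, and which I would highlight, is the very first reduction: legitimizing the absorption $s_{\lambda^i}(W_i)\det(W_i^{\alpha_i})=s_{\lambda^i+\alpha_i}(W_i)$ under the dressing and under the expectation. The clean route is the polynomial-identity argument above, rather than a pointwise argument that would require invertibility of the random dressed words $\Dr_X[W_i]$; once that is secured, the entire statement follows from Proposition~\ref{prop-Schur} and Lemma~\ref{2-properties} with no further analytic input.
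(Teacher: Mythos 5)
Your proof is correct and follows the same route the paper uses for the analogous single-edge Corollary after Lemma \ref{Lemma-s-Z}: absorb each determinant via $s_{\lambda^i}(W_i)\det\left((W_i)^{\alpha_i}\right)=s_{\lambda^i+\alpha_i}(W_i)$ from Lemma \ref{2-properties}, apply Proposition \ref{prop-Schur} to the shifted partitions, and convert the $\bpow_\infty$-, $\mathbb{I}_N$- and $W_i^*$-factors back with the same lemma. Your extra care in justifying the absorption as a polynomial identity (so that invertibility of the dressed words is not needed under the expectation) is a welcome refinement of what the paper leaves implicit.
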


\begin{Proposition}[\cite{NO2020F}]\label{prop3_dv}
let $\Delta^i=\left(\Delta^i_1,\Delta^i_2,\dots \right),\,i=1,\dots,k$ be a set 
of partitions of the weights $d_1,\dots,d_k=d$.

Let $\mu^{k+1}=(\mu^{(k+1)}_1,\mu^{(k+1)}_2,\dots),\dots,\mu^{(\f)}=\mu=(\mu_1,\mu_2,\dots)$,
where $1<k<\f$, be a set of partitions. We get
\be
\label{skew1}
\E_{n_1,n_2}\left\{ 
\Dr_X\left[  \prod_{i=1}^k \bpow_{{\Delta}^i}(W_i)
\prod_{i=k+1}^{\f} s_{\mu^i}(W_i)\right]
  \right\}=
\ee
\be
\delta_{|\mu|,d}
\delta_{d_1,\dots, d_k}
\delta_{\mu^{k+1},\dots,\mu^{\f-k} } \hbar^{n_1d}\left((N)_\mu\right)^{-n_2}
 \left( \frac{{\rm dim}\,\mu}{d!} \right)^{-n_1-n_2}
 \chi_\mu({\Delta}^1)\cdots \chi_\mu({\Delta}^k)
\prod_{i=1}^\V s_\lambda(W_i^*)
\ee
where $(N)_\mu$  is given by (\ref{N-lambda}) and $\chi_\mu(\Delta)$
is the character of the symmetric group; see Remark (\ref{char-char}).
The symbol $\delta_{\mu^{k+1},\dots,\mu^{\f} }$ is equal to 1 in case $\mu^{k+1}=\cdots=\mu^{\f-k} $
and is equal to 0 otherwise.

Similarly, let $\mu^{k+1},\dots,\mu^{\V}=\mu$, where $1<k<\V$, be a set of partitions. Then
\be
\label{MM1-all-skew}
\E_{n_1,n_2}\left\{ 
\Dr_X\left[  \prod_{i=1}^k \bpow_{{\Delta}^i}(W_i^*)
\prod_{i=k+1}^{\V} s_{\mu^i}(W_i^*)\right]
  \right\}=
\ee
\be
\delta_{|\mu|,d}
\delta_{d_1,\dots, d_k}
\delta_{\mu^{k+1},\dots,\mu^{\V-k} } \hbar^{n_1d}\left((N)_\mu\right)^{-n_2}
 \left( \frac{{\rm dim}\,\mu}{d!} \right)^{-n_1-n_2}
 \chi_\mu({\Delta}^1)\cdots \chi_\mu({\Delta}^k)
 \prod_{i=1}^\f s_\lambda(W_i) 
\ee

\end{Proposition}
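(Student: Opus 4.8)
The plan is to reduce both identities to the pure-Schur statement of Proposition~\ref{prop-Schur} by expanding every power sum into Schur functions. First I would invoke the character map relation (\ref{powersum-Schur}), which in the normalized-character notation of Remark~\ref{char-char} reads $\bpow_{\Delta^i}=\sum_{\nu^i\in\Upsilon_{d_i}}\chi_{\nu^i}(\Delta^i)\,s_{\nu^i}$. Since this is an identity of symmetric functions it may be applied to the dressed monodromy $\Dr_X[W_i]$ under the integral; substituting it for each of the $k$ power-sum factors in (\ref{skew1}) turns the integrand into a finite linear combination, indexed by $(\nu^1,\dots,\nu^k)$ with $\nu^i\in\Upsilon_{d_i}$, of dressed Schur products $\Dr_X\!\left[\prod_{i=1}^k s_{\nu^i}(W_i)\prod_{i=k+1}^{\f}s_{\mu^i}(W_i)\right]$ carrying the scalar coefficient $\prod_{i=1}^k\chi_{\nu^i}(\Delta^i)$. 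Because the sum is finite and each summand is a polynomial in the matrix entries, the expectation may be taken termwise.

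Next I would apply Proposition~\ref{prop-Schur}, i.e.\ formula (\ref{E-Schur}), to each of these Schur products. Its decisive feature is the Kronecker symbol $\delta_{\lambda^1,\dots,\lambda^\f}$: the expectation vanishes unless all $\f$ Schur labels coincide, and in particular unless they share a common weight. Hence only the terms with $\nu^1=\cdots=\nu^k=\mu^{k+1}=\cdots=\mu^{\f}$ survive; writing $\mu$ for this common partition, the single constraint unpacks into exactly the Kronecker factors in the claimed answer. Indeed $\nu^i=\mu$ for each $i\le k$ forces $|\nu^i|=|\mu|$, giving $\delta_{d_1,\dots,d_k}$ together with $\delta_{|\mu|,d}$, while the coincidence of the remaining labels gives $\delta_{\mu^{k+1},\dots,\mu^{\f}}$; whenever these compatibility conditions fail every surviving term is absent and both sides vanish. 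On the surviving term the coefficient collapses to $\prod_{i=1}^k\chi_\mu(\Delta^i)$, and (\ref{E-Schur}) contributes the factor $\hbar^{n_1 d}\bigl(s_\mu(\bpow_\infty)\bigr)^{-n_1}\bigl(s_\mu(\I)\bigr)^{-n_2}\prod_{i=1}^{\V}s_\mu(W_i^*)$.

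It then remains only to recast the normalization. Using (\ref{dim}), namely $s_\mu(\bpow_\infty)={\rm dim}\,\mu/d!$, together with the identity $s_\mu(\I)=(N)_\mu\,s_\mu(\bpow_\infty)$ recorded before Lemma~\ref{Lemma-s-U}, one obtains
\be
\bigl(s_\mu(\bpow_\infty)\bigr)^{-n_1}\bigl(s_\mu(\I)\bigr)^{-n_2}
=\Bigl(\tfrac{{\rm dim}\,\mu}{d!}\Bigr)^{-n_1-n_2}\bigl((N)_\mu\bigr)^{-n_2},
\ee
which is precisely the prefactor displayed in (\ref{skew1}) (with the understanding that the $s_\lambda(W_i^*)$ on its right-hand side is to be read as $s_\mu(W_i^*)$, $\lambda=\mu$). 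This proves the first identity; the second, (\ref{MM1-all-skew}), follows verbatim upon expanding the power sums of the dual monodromies $W_i^*$ and invoking (\ref{E-Schur1}) in place of (\ref{E-Schur}), which merely interchanges the roles of $(W_i,\f)$ and $(W_i^*,\V)$.

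The only genuine points requiring care are bookkeeping rather than analysis. One must check that applying (\ref{E-Schur}) to Schur labels of a priori different weights is legitimate: this is safe because a nonzero expectation already requires all labels equal, so the mixed-weight terms contribute zero and one is effectively always inside a single weight-$d$ sector where Proposition~\ref{prop-Schur} applies as stated. The remaining subtlety is purely combinatorial---verifying that the single ``all labels equal'' condition produced by (\ref{E-Schur}) is equivalent to the conjunction $\delta_{|\mu|,d}\,\delta_{d_1,\dots,d_k}\,\delta_{\mu^{k+1},\dots,\mu^{\f}}$ advertised in the statement---which is immediate once the weights are tracked.
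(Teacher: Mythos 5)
Your proof is correct and follows essentially the same route as the paper, whose entire argument is the single sentence that the Proposition ``is derived from the previous one using (\ref{char-char}) and (\ref{orth2})'': you expand the power sums via the character map into Schur functions and let the Kronecker delta of Proposition~\ref{prop-Schur} collapse the sum. Your write-up simply fills in the bookkeeping (and in fact shows the forward substitution makes the orthogonality relation (\ref{orth2}) unnecessary), including the correct reading of the paper's $s_\lambda(W_i^*)$ as $s_\mu(W_i^*)$.
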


The Proposition is derived from the previous one using (\ref{char-char}) and (\ref{orth2}).

\bp\label{prop-power} 
Let $\Delta^i=\left(\Delta^i_1,\Delta^i_2,\dots \right),\,i=1,\dots,\f$ be a set 
of partitions of weights $d_1,\dots,d_\f=d$.

Then
\bea\label{E-power1}
 &\E_{n_1,n_2}\left\{ \Dr_X\left[ \bpow_{\Delta^1}\left( W_1 \right) \cdots
 \bpow_{\Delta^\f}\left( W_\f \right)\right]\right\}
 \prod_{i=1}^\f \frac{1}{z_{\Delta^i}}\,
 =
 \\   \label{E-power2}
   &\delta_{d_1,\dots, d_\f}           \hbar^{n_1d}
\sum\limits_{\widetilde{\Delta}^1,\dots,\widetilde{\Delta}^\V\in \Upsilon_d}
\bpow_{\tilde\Delta^1}(W^*_{1})\cdots
\bpow_{\tilde\Delta^\V}(W^*_{\V})
H_{\e}( \widetilde{\Delta}^1,\dots, \widetilde{\Delta}^\V ,\Delta^1,\dots, \Delta^\f|n_2),
\eea
where $\delta_{d_1,\dots, d_\f} =1$ in case $d_1=\cdots = d_\f$ and 0 otherwise, and $\Upsilon_d$\
is the set of all partitions of weight $d$.
The~prefactor $H_{\e}( \widetilde{\Delta}^1,\dots, \widetilde{\Delta}^\V ,\Delta^1,\dots, \Delta^\f |n_2)$
in (\ref{E-power2})
is the weighted Hurwitz number (\ref{Mednyh-m}) with $k=\f+\V$ and $\e=\f-n+\V$ .

Similarly, for a given set of partitions $\tilde{\Delta}^i=\left(\tilde{\Delta}^i_1,
\tilde{\Delta}^i_2,\dots \right),\,i=1,\dots,\V$ with weights~$d_1,\dots,d_\V=d$~respectively, we get
\bea\label{E-power1*}
 &\E_{n_1,n_2}\left\{ \Dr_X\left[ \bpow_{\tilde{\Delta}^1}\left( W^*_1 \right) \cdots
 \bpow_{\tilde{\Delta}^\V}\left( W^*_\V \right)\right]\right\}=
 \\   \label{E-power2*}
   &\delta_{d_1,\dots, d_\V}           \hbar^{n_1d}
\sum\limits_{\widetilde{\Delta}^1,\dots,\widetilde{\Delta}^\f\in \Upsilon_d}
\bpow_{\Delta^1}(W_{1})\cdots
\bpow_{\Delta^\V}(W_{\V})
H_{\e}( \widetilde{\Delta}^1,\dots, \widetilde{\Delta}^\V ,\Delta^1,\dots, \Delta^\f|n_2),
\eea
where $H_{\e}( \widetilde{\Delta}^1,\dots, \widetilde{\Delta}^\V ,\Delta^1,\dots, \Delta^\f)$ is 
exactly the same as in (\ref{E-power2}).
\ep
Propositions \ref{prop-Schur}--\ref{prop-power} and  are equivalent. This can be proven with the help of 
(\ref{orth1}), (\ref{orth2}) and (\ref{Mednyh-m}). 

\br  Proposition \ref{prop-power} was proved  in \cite{NO2020} using a geometrical construction of
Hurwitz numbers as a number of ways to glue polygons. Each matrix entry,
say $(X_i)_{a,b}$, may be drawn as an arrow with labels $a$ and $b$ at the startpoint and the endpoint 
respectively. We draw solid arrow for an entry of a random matrix and a dashed arrow for an entry of a source matrix. 
The product of matrices we draw as arrows sequentially assigned to each other. The trace of a product is drawn 
as a polygon. Now each $L_X\left[\ttr W_c\right]$ is a polygon with alternating solid and dashed-edge arrows;
we orient the edges counterclockwise. In \cite{NO2020} we named such polygons countries. 
Thus, we relate each dressed word to a country.
It may be shown that the expectation
$$
\E_{n,0}\left\{ L_X\left[ \ttr\left( W_1 \right)\cdots \ttr\left( W_1 \right)\right]\right\}
$$
may be viewed as the result of gluing of the net of countries into a surface, say $\Sigma$, and $\e=\f-n+\V$ being
the Euler characteristic of this surface. This is a result of the Gaussian integration over matrices $Z_i$.
Oppositely~directed solid arrows (corresponding to $Z_i$ and $Z^\dag_i$) form sides of ribbon edges. 
These edges end at a boundary of disks (inflated vertices)
with dashed boundaries (source matrices are attached to the segments of dashed boundaries). In \cite{NO2020} we named this disk
watchtowers. There are 
$n$ ribbon edges (the boarders of countries) and $2n$ dashed edges (segments of boundaries of disks---of 
the boarders of the watchtowers); there are $\V$ watchtowers; and there are 
$\f$ countries with alternating (solid-dashed) edges. There are $2n$ 3-valent vertices (ends of ribbons):
there are two dashed arrows (one is outgoing; another is incoming) and one ribbon (one side is the solid outgoing
arrow; the other side is a incoming solid arrow) attached to each vertex.
This is a graph $\Gamma$ drawn on $\Sigma$. This graph is related~to
\be
\E_{n,0}\left\{ L_X\left[ \ttr\left( W_1 \right)\cdots \ttr\left( W_1 \right)\right]\right\}=
  \hbar^{n}
\ttr\left(W^*_{1}\right)\cdots
\ttr\left(W^*_{\V}\right),
\ee
which is (\ref{E-power1}) for $d=1$ (in this case, all $\Delta^i$ has weight 1, and $p_{(1)}(W)=\ttr(W)$).
In case $d>1$, instead of each $\ttr\left(W_i\right)$ we have the product
$$
\ttr\left(\left(W_i\right)^{\Delta^i_1}\right)\cdots \ttr\left(\left(W_i\right)^{\Delta^i_{\ell_i}}\right)
\to \ttr\left( W_i\right) .
$$

One may interpret it as a projection of $\ell_i$ polygons to the country (the polygon) labeled by $i$.

\er

\br
Notice that the answers for the expectation values which were considered above 
depend only on eigenvalues of 
$W^*_i$ or $W_i$ $ i=1,\dots,\V$.

\er

\section{Examples of Matrix Models}

Recall that in Section \ref{Partitions-and-Schur-functions} we introduced the function 
$\tau_r(n,\bpow^{(1)},\bpow^{(2)})$.
In what follows we use the~conventions:
\be\label{convention}
\tau_r(\bpow^{(1)},\bpow^{(2)}):=\tau_r(0,\bpow^{(1)},\bpow^{(2)})=
\sum_\lambda r_\lambda s_\lambda(\bpow^{(1)})s_\lambda(\bpow^{(2)}),\quad 
r_\lambda=r_\lambda(0)
\ee

It depends on two sets $\bpow^i=(p^{(i)}_1,p^{(i)}_2,\dots),\, i=1,2$ , and on the choice of 
an arbitrary function of the variable $r$. 
\hl{(This is an example of the so-called tau function, but we will not use this fact.)}
As one of their sets, we will choose $\bpow^2=\bpow(X)$ like in (\ref{tau(nXp)}), and the second set
will be the set of arbitrary parameters. With $r=1$ we get 
\be\label{tau-vac}
\tau_1(\bpow,X)=e^{\sum_{m>0}\frac{1}{m}p_m\tr\left( X^m\right)}
\ee

For example, if we take 
$$
r(x)=\frac{\prod_i^p(a_i+x)}{\prod_i^q(b_i+x)},
$$
and in addition to $\bpow^{1}=(1,0,0,\dots)$
we get the so-called hypergeometric function of the matrix argument:
\be\label{p-F-q}
{_pF}_q\left({a_1,\dots,a_p\atop b_1,\dots,b_q}\arrowvert X\right)=
\sum_\lambda \frac{{\rm dim}\,\lambda}{|\lambda|!}s_\lambda(X)
\frac{\prod_i^p(a_i+x)_\lambda}{\prod_i^q(b_i+x)_\lambda}
\ee 

Special cases: 
\be\label{e^tr}
e^{\tr X}=\sum_\lambda s_\lambda(X)\frac{{\rm dim}\,\lambda}{|\lambda|!},\quad 
\ee
\be\label{det(1-X)}
\det(1-zX)^{-a} = \sum_\lambda z^{|\lambda|} (a)_\lambda s_\lambda(X)\frac{{\rm dim}\,\lambda}{|\lambda|!}
\ee

{Integrals.} Using Proposition \ref{prop-Schur}  we obtain

\begin{Theorem} Suppose $W_1,\dots, W_\f$ and $W_1^*,\dots, W_\V^*$ are dual sets (\ref{duality}).
Let sets $\bpow^i=\left(p^{(i)}_1,p^{(i)}_2,p^{(i)}_3,\dots\right)$, $i=1,\dots,{\rm max}(\f,\V)$ be independent complex
parameters and $r{(i)},\,i=1,\dots,{\rm max}(\f,\V)$ be a set of given functions in one variable.
\be\label{E-tau-X-p}
 \E_{n_1,n_2}\left\{ \Dr_X\left[ \tau_{r^{(1)}}\left(\bpow^1, W_1 \right) \cdots
 \tau_{r^{(\f)}}\left(\bpow^\f, W_\f \right)\right]\right\}  
\ee 
\be \label{Th1}
= \sum_\lambda \,r_\lambda\,
\hbar^{n_1|\lambda|} 
\left(\frac{{\rm dim}\,\lambda}{|\lambda|!}\right)^{-n}
\prod_{i=1}^\V s_{\lambda}\left(W_i^*\right)
\prod_{i=1}^\f s_\lambda(\bpow^i),
\ee
where each $\tau_{r^{(i)}}\left(\bpow^i, W_i \right)$ is defined by (\ref{tau(nXp)})
$$
r_\lambda=\left((N)_\lambda\right)^{-n_2}
\prod_{i=1}^\f r^{(i)}_\lambda(n)
$$

Similarly
\be\label{E-tau-X-p*}
 \E_{n_1,n_2}\left\{ \Dr_X\left[ \tau_{r^{(1)}}\left(\bpow^1, W_1^* \right) \cdots
 \tau_{r^{(\V)}}\left(\bpow^\V, W_\V^* \right)\right]\right\}  
\ee 
\be \label{Th1*}
= \sum_\lambda \,r_\lambda\,
\hbar^{n_1|\lambda|} 
\left(\frac{{\rm dim}\,\lambda}{|\lambda|!}\right)^{-n}
\prod_{i=1}^\f s_{\lambda}\left(W_i\right)
\prod_{i=1}^\V s_\lambda(\bpow^i),
\ee

\end{Theorem}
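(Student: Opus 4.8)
The plan is to recognize the statement as the ``dressed'' version of Proposition \ref{prop-Schur}, obtained by expanding each factor $\tau_{r^{(i)}}$ in the Schur basis and integrating term by term. First I would insert the defining series (\ref{tau(nXp)}) for each hypergeometric factor,
$$
\tau_{r^{(i)}}\left(\bpow^i,W_i\right)=\sum_{\lambda^i} r^{(i)}_{\lambda^i}\, s_{\lambda^i}(\bpow^i)\, s_{\lambda^i}(W_i),
$$
so that the left-hand side of (\ref{E-tau-X-p}) becomes a multiple sum over $\lambda^1,\dots,\lambda^\f$. Since the parameter sets $\bpow^i$ and the content-product coefficients $r^{(i)}_{\lambda^i}$ do not involve the random matrices and are untouched by the dressing $\Dr_X$ (which acts only on the monodromy arguments $W_i$), I can pull them outside the expectation, leaving the core integral $\E_{n_1,n_2}\{\Dr_X[s_{\lambda^1}(W_1)\cdots s_{\lambda^\f}(W_\f)]\}$ to which Proposition \ref{prop-Schur} applies directly.

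Next I would invoke (\ref{E-Schur}). Its Kronecker factor $\delta_{\lambda^1,\dots,\lambda^\f}$ forces $\lambda^1=\cdots=\lambda^\f=:\lambda$, collapsing the $\f$-fold sum to a single sum over $\lambda$ and producing the prefactor $\hbar^{n_1|\lambda|}(s_\lambda(\bpow_\infty))^{-n_1}(s_\lambda(\I))^{-n_2}\prod_{j=1}^\V s_\lambda(W_j^*)$. The remaining step is purely bookkeeping: using $s_\lambda(\bpow_\infty)={\rm dim}\,\lambda/|\lambda|!$ from (\ref{dim}) together with the known relation $s_\lambda(\I)=(N)_\lambda\, s_\lambda(\bpow_\infty)$, I combine
$$
\left(s_\lambda(\bpow_\infty)\right)^{-n_1}\left(s_\lambda(\I)\right)^{-n_2}
=\left(\frac{{\rm dim}\,\lambda}{|\lambda|!}\right)^{-n}\left((N)_\lambda\right)^{-n_2},
$$
with $n=n_1+n_2$. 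Absorbing the surviving $\prod_{i=1}^\f r^{(i)}_\lambda$ together with $((N)_\lambda)^{-n_2}$ into the single coefficient $r_\lambda=((N)_\lambda)^{-n_2}\prod_{i=1}^\f r^{(i)}_\lambda$ reproduces exactly (\ref{Th1}). The companion identity (\ref{E-tau-X-p*})--(\ref{Th1*}) follows by the identical argument with (\ref{E-Schur1}) used in place of (\ref{E-Schur}) and the roles of $W_i$ and $W_i^*$ interchanged.

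The only genuinely delicate point is the interchange of the infinite summation over partitions with the Gaussian and Haar integration: each $\tau_{r^{(i)}}$ is an infinite Schur series, so strictly one should either read (\ref{E-tau-X-p}) as an identity of formal series in the parameters $\bpow^i$, in which case termwise integration is automatic and no convergence estimate is needed, or else restrict to functions $r^{(i)}$ for which the series converges on the relevant matrix domain and justify the exchange by dominated convergence. Given the convention (\ref{convention}) and the formal-series spirit of the preceding material, I would adopt the formal reading, so that the proof reduces to the termwise application of Proposition \ref{prop-Schur} described above; everything else is the elementary prefactor algebra.
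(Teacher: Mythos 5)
Your proposal is correct and follows exactly the route the paper intends: the paper introduces the theorem with the single line ``Using Proposition \ref{prop-Schur} we obtain,'' and your argument --- expand each $\tau_{r^{(i)}}$ in the Schur basis, apply (\ref{E-Schur}) termwise so the Kronecker delta collapses the multiple sum, then combine $(s_\lambda(\bpow_\infty))^{-n_1}(s_\lambda(\mathbb{I}_N))^{-n_2}$ via $s_\lambda(\mathbb{I}_N)=(N)_\lambda s_\lambda(\bpow_\infty)$ into the stated $r_\lambda$ --- is precisely that derivation, with the formal-series reading of the interchange of sum and expectation made explicit.
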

 
\br We recall the convention (\ref{convention})
 In (\ref{Th1}) $r_\lambda$ is the content product (\ref{content-product})
$$ 
r_\lambda=r_\lambda(0)=\prod_{(i,j)\in\lambda} r(j-i)
$$
 where
$$
r(x)=\left(N+x  \right)^{-n_2}\prod_{i=1}^\f r^{(i)}(x)
$$
\er 

To get examples we choose 

\begin{itemize}
 \item Dual sets $W_1,\dots,W_\f\leftrightarrow W_1^*,\dots,W_\V^*$;
 \item The fraction of unitary matrices given by $n_2$;
 \item The set of functions $r^{(i)},\,i=1,\dots,\f$;
 \item The sets $\bpow^{(i)},\,i=1,\dots,\f$.
\end{itemize}

\br\label{simplifications} 
Answers in some cases are further simplified. Let us mark two
cases

(i) Firstly, this is the case when 
the spectrum of the stars has the form 
\be\label{I-N-k}
{\rm Spect} \,W_i^* = {\rm Spect}\,\mathbb{I}_{N,k_i}=\diag \{1,1,\dots,1,0,0,\dots,0 \},\quad i=1,\dots,\V
\ee
where $\mathbb{I}_{N,k_i}$ is the matrix
with $k_i$ units of the main diagonal. Such star monodromies  obtained in case source matrices have a rank
smaller than $N$.
Insertion of such matrices in the left-hand sides of (\ref{Th1}) and (\ref{Th1*}) corresponds to the integration
over rectangular random matrices.
One should take into account
that
\[
 s_\lambda(\mathbb{I}_{N,k})=(k)_\lambda s_\lambda(\bpow_\infty),
\]
where we recall the notation
\be\label{Pochhammer-lambda}
 (a)_\lambda:=(a)_{\lambda_1}(a-1)_{\lambda_2}\cdots (a-\ell+1)_{\lambda_\ell},
\ee

(ii) The case is the specification of the sets $\bpow^i$, $i=1,\dots,\f$ according to the following

\bl\label{specializations} Denote 
\be\label{p_infty}
\bpow_\infty =(1,0,0,\dots)
\ee
\be\label{p(a)}
\bpow(a)=\left(a,a,a,\dots \right)
\ee
\be\label{p(t,q)}
\bpow(\texttt{q},\texttt{t})=\left(p_1(\texttt{q},\texttt{t}),p_2(\texttt{q},\texttt{t}),\dots\right)\,,\quad
p_m(\texttt{q},\texttt{t})=  \frac{1-\texttt{q}^m}{1-\texttt{t}^m}
\ee 

Then
\be\label{Schur-t(a)}
\frac{s_\lambda(\bpow(a))}{s_\lambda(\bpow_\infty)}=(a)_\lambda\,,\quad \bpow(a)=(a,a,a,\dots)
\ee
where $(a)_\lambda:=(a)_{\lambda_1}(a-1)_{\lambda_2}\cdots (a-\ell+1)_{\lambda_\ell}$, $(a)_n:=a(a+1)\cdots(a+n-1)$,
where $\lambda=(\lambda_1,\dots,\lambda_\ell)$ is a partition.
More generally
\be\label{Schur-t(t,q)}
\frac{s_\lambda(\bpow(\texttt{q},\texttt{t}))}{s_\lambda(\bpow(0,\texttt{t}))}=(\texttt{q};\texttt{t})_\lambda\,,
\ee
where $(\texttt{q};\texttt{t})_\lambda =
(\texttt{q};\texttt{t})_{\lambda_1}(\texttt{q t}^{-1};\texttt{t})_{\lambda_2}\cdots
(\texttt{q t}^{1-\ell};\texttt{t})_{\lambda_\ell}$ where
$(\texttt{q};\texttt{t})_k=(1-\texttt{q})(1-\texttt{q t})\cdots (1-\texttt{q t}^{n-1}) $
is $\texttt{t}$-deformed Pochhammer symbol. $(\texttt{q};\texttt{t})_0=1$ is implied.
\el

For such specifications the right-hand side of (\ref{Th1}) can ta 

With such specifications, one can diminish the number of the Schur functions in the right-hand side of 
(\ref{Th1}) (or of (\ref{Th1*})) and the right-hand side can take one of the forms:
\be\label{form1}
\sum_\lambda r_\lambda s_\lambda(A)s_\lambda(B)
\ee
\be\label{form2}
\sum_\lambda r_\lambda s_\lambda(A)
\ee
\be\label{form3}
\sum_\lambda r_\lambda 
\ee

For (\ref{form1}) there is a determinant representation; for (\ref{form2}) there is a Pfaffian representation
and (\ref{form3}) can be rewritten as a sum of products.
Indeed if we introduce $r(x)=e^{T_{x-1}-T_x}$, and $\lambda=(\lambda_1,\dots,\lambda_N)$, then
\be
r_\lambda(m)=\prod_{(i,j)\in\lambda}r(m+j-i)=e^{T_m+\cdots+T_{m-N}}\prod_{i=1}^N e^{T_{\lambda_i-i+m}}
\ee

For instance, one can take $r(x)=a+x$ and get
\be\label{(a)_lambda-poch)}
(a)_\lambda =\frac{\Gamma(a+\lambda_1-1)\Gamma(a+\lambda_2-2)\cdots \Gamma(a+\lambda_N-N)}
{\Gamma(a)\Gamma(a-1)\cdots \Gamma(a-N+1)}
\ee

Then we introduce $h_i=\lambda_i-i+N$ and write
\be\label{pochh-via-gamma}
\sum_\lambda\frac{(a)_\lambda}{(b)_\lambda}=
\sum_{h_1>\cdots >h_N\ge 0}\prod_{i=1}^N \frac{\Gamma(b-i+1)}{\Gamma(a-i+1)}\frac{\Gamma(h_i+a-N)}{\Gamma(h_i+b-N)}
\ee
where $\Gamma$ is the gamma-function. 
(In case the argument of gamma-function turns out to be a nonpositive integer one should keep in mind 
both the enumerator and denominator.)
 See examples below.

\er

\bx\label{E1} See Example \ref{1} and Figure \ref{fig:figure2}a. Take $X_1=Z$ and $r$ given by(\ref{r-rational}). 
The example of (\ref{Th1}) can be chosen as follows
\be
\E_{1,0}\left\{  
{_pF}_q\left({a_1,\dots,a_p\atop b_1,\dots,b_q}\arrowvert ZC_1\right)
{_{p'}F}_{q'}\left({a'_1,\dots,a'_{p'}\atop  b'_1,\dots,b'_{q'}}\arrowvert Z^\dag C_{-1}\right)
\det\left(Z Z^\dag  \right)^{\alpha}
\right\}=
\ee
\be
{_{p'}F}_{q'}\left({a_1,\dots,a_p,a'_1,\dots,a'_{p'},N+\alpha \atop b_1,\dots,b_q,
b'_1,\dots,b'_{q'},N}\arrowvert C_1 C_{-1}\right),
\ee
corresponding determinantal representation see a (\ref{tau(nXp)}).

See  and Figure \ref{fig:figure2}b which is dual to Figure \ref{fig:figure2}a.
An example of (\ref{Th1*}) can be chosen as
$$
\E_{1,0}\left\{
{_{p'}F}_{q'}\left({a_1,\dots,a_p,a'_1,\dots,a'_{p'},N+\alpha \atop b_1,\dots,b_q,
b'_1,\dots,b'_{q'},N}\arrowvert ZC_1 Z^\dag C_{-1}\right)\det\left(Z Z^\dag  \right)^{\beta}
\right\}=
$$
\be\label{}
\sum_\lambda s_\lambda(C_1)s_\lambda(C_{-1})
\frac{(N+\alpha)_\lambda(N+\beta)_\lambda)}{\left( (N)_\lambda \right)^2}
\frac{\prod_i^p(a_i)_\lambda}{\prod_i^q(b_i)_\lambda}\frac{\prod_i^{p'}(a_i)_\lambda}{\prod_i^{q'}(b_i)_\lambda},
\ee

The determinantal representation of the left-hand side is given by (\ref{tau(nXY)}).
\ex

\bx\label{E2} See Example \ref{2} and Figure \ref{fig:figure5}a. Take $X_1=U_1,\,X_2 =U_2$.
\be
\E_{0,2}\left\{e^{\sum_{m>0} \frac 1m p_m \tr\left(U_1C_1U_2C_2U_1^\dag C_{-1}U_2^\dag C_{-2}\right)^m}
\right\}=
\ee
$$
 =\sum_{\lambda} 
 \frac{1}{\left((N)_\lambda\right)^2}
 \frac{s_\lambda(\bpow) s_\lambda(C_1C_{-2}C_{-1}C_2)}{\left(s_\lambda(\bpow_\infty)\right)^2}
$$

Let us take $\bpow=(az,az^2,az^3,\dots)$ and $W_1^*=\mathbb{I}_{N,k}$; see (\ref{p(a)}) and (\ref{I-N-k}).
We obtain the left-hand side as
$$
\E_{0,2}\left\{\det\left(1-zU_1C_1U_2C_2U_1^\dag C_{-1}U_2^\dag C_{-2}\right)^{-a}
\right\}=\sum_\lambda z^{|\lambda|}\frac{(a)_\lambda (k)_\lambda}{\left((N)_\lambda\right)^2}=
$$
$$
z^{-\tfrac 12 N(N-1)}
\sum_{h_1>\cdots >h_N\ge 0}\prod_{i=1}^Nz^{h_i}\frac{\left(\Gamma(N-i+1)\right)^2}{\Gamma(a-i+1)\Gamma(k-i+1)}
\frac{\Gamma(h_i+a-N)\Gamma(h_i+k-N)}{\left(\Gamma(h_i)\right)^2}
$$
where ${\rm Spect}\,C_1C_{-2}C_{-1}C_2={\rm Spect}\,\mathbb{I}_{N,k}$
see (\ref{pochh-via-gamma}).
\ex

\bx\label{E3} See Example \ref{3} and Figure \ref{fig:figure5}c.
$$
\E_{2,0}\left\{e^{\sum_{m>0} \frac 1m p^{(1)}_m \tr\left(Z_1C_1)^m\right)+
\sum_{m>0} \frac 1m p^{(2)}_m \tr\left(Z_2C_2)^m\right)}
\det\left(1-z Z_1^\dag C_{-1}Z_2^\dag C_{-2}\right)^{-a}\prod_{i=1}^3
\det\left(Z_i Z_i^\dag  \right)^{\alpha}
\right\}
$$
$$
=\sum_\lambda z^{|\lambda|} s_\lambda(\bpow^1)s_\lambda(\bpow^2) (a)_\lambda
\left(\frac{ (N+\alpha)_\lambda}{(N)_\lambda}\right)^3
$$

For a determinant representation see (\ref{tau(npp)}) 

\ex

\bx\label{E4} For decoration of  Figure \ref{fig:figure1}e  we put $X_i=Z_i$.
$$
\E_{2,0}\left\{\bpow_\Delta(Z_1^\dag C_{-1})s_\lambda(Z_1C_1Z_2C_2Z_2^\dag C_{-2}) 
\right\} = \delta_{|\lambda|,|\Delta|}\frac{{\rm dim}\,\lambda}{|\lambda|!} \chi_\lambda(\Delta)
s_\lambda(C_2)s_\lambda(C_1C_{-2}C_{-1})
$$
 see Proposition \ref{MM1-all-skew}.
\ex 

\bx\label{E5} Figure \ref{fig:figure4}d in particular yields
$$
\E_{2,0}\left\{
s_\lambda(Z_1C_1Z_2C_2Z_3C_3Z_4C_4)s_\lambda(Z_3^\dag C_{-3} Z_2^\dag C_{-2}Z_5C_5)
s_\lambda(Z_5^\dag C_{-5}Z_1^\dag C_{-1}Z_4^\dag C_{-4})\right\} = 
$$
$$
\left(\frac{{\rm dim}\,\lambda}{|\lambda|!}\right)^{-5}
s_\lambda(C_1C_{-2}C_{-5}) s_\lambda(C_{2}C_{-3}) s_\lambda(C_5C_3C_{-4}) s_\lambda(C_4C_{-1})
$$.
\ex

\bx\label{E6} In the case below we use an open chain with $n$ edges as in Figures \ref{fig:figure2}b, 
\ref{fig:figure1}b and \ref{fig:figure3}a.
\begin{myequation}
\E_{n,0}\left\{
e^{\sum_{m>0}\frac 1m p^{(1)}_m \tr\left((Z_1C_{1}Z_2C_2\cdots Z_nC_{n}Z_n^\dag C_{-n} 
\cdots Z_1^\dag C_{-1})^m\right)} \right\}=
\sum_{\lambda} s_\lambda(C_n) s_\lambda(C_{-1}) \frac{s_\lambda(\bpow)}{s_\lambda(\bpow_\infty)}
\prod_{i=1}^{n-1} \frac{s_\lambda(C_{i}C_{-i-1})}{s_\lambda(\bpow_\infty)}
\end{myequation}

Graphs dual to the chain look like in Figure \ref{fig:figure3}b.
$$
\E_{n,0}\left\{e^{\sum_{m>0}\frac 1m p_m \tr\left((Z_1^\dag C_{-1})^m\right)+
\sum_{m>0}\frac 1m p^{(n)}_m \tr\left((Z_n C_{n})^m\right)}
\prod_{i=1}^{n-1}
e^{\sum_{m>0}\frac 1m p^{(i)}_m \tr\left((Z_iC_{i}Z_{i+1}^\dag C_{-i-1})^m\right)}
\right\}
$$
$$
=\sum_{\lambda} s_\lambda(\bpow)s_\lambda(C_{1}C_2\cdots C_{n}C_{-n} \cdots C_{-1})
\prod_{i=1}^{n}\frac{s_\lambda(\bpow^{i})}{s_\lambda(\bpow_\infty)}
$$

These relations generalize product (\ref{zz-dag}) and (\ref{z,z-dag})).
\ex

\bx\label{E7} Our graph is a polygon with $n$ edges and $n$ vertices (stars); see Figures \ref{fig:figure2}a,
\ref{fig:figure1}d and \ref{fig:figure3}c for examples.
\be
\E_{n,0}\left\{
e^{\sum_{m>0}\frac 1m p^{(1)}_m 
\tr\left( Z_1C_{1}Z_2C_2\cdots Z_nC_n  \right)^m
+
\sum_{m>0}\frac 1m p^{(2)}_m \tr\left(Z_n^\dag
C_{-n}Z_{n-1}^\dag C_{1-n}\cdots Z_1^\dag C_{-1}\right)^m }
\prod_{i=1}^n
\det\left(Z_i Z_i^\dag  \right)^{\alpha}
\right\}  
\ee
$$
 =\sum_\lambda s_\lambda(\bpow^1)s_\lambda(\bpow^2)
 \prod_{i=1}^n\frac{(N+\alpha)_\lambda s_\lambda(C_{i}C_{-i-1})}{(N)_\lambda s_\lambda(\bpow_\infty)} 
$$
(where we put $C_{-n-1}=C_{-1}$).

A graph dual to the polygon can be viewed as two-stars graph with $n$ edges which connect stars; see~Figures~\ref{fig:figure1}d and \ref{fig:figure3}d as examples.
\be
\E_{n,0}\left\{\prod_{i=1}^n e^{\sum_{m>0}\frac 1m p_m^{(i)} 
\tr\left( Z_i C_{i}Z_{i+1}^\dag C_{-i-1}\right)^m}
  \right\} =  \sum_\lambda 
  s_\lambda(C_{1}C_2\cdots C_n) s_\lambda(C_{-n}C_{1-n}\cdots C_{-1})
  \prod_{i=1}^n \frac{s_\lambda(\bpow^i)}{ s_\lambda(\bpow_\infty)} 
\ee

To apply determinantal formulas one should use Remark \ref{simplifications}.

\ex

\bx\label{E8} Consider the star-graph with $n$-rays which end at other stars (see Figure \ref{fig:figure4}a where $n=3$). This~situation corresponds to (\ref{zzdag-zzdag}).

\be
\E_{n,0}\left\{e^{\sum_{m>0}\frac 1m p_m 
\tr\left( Z_1C_{1}Z_1^\dag C_{-1}\cdots Z_nC_{n}Z_{n}^\dag C_{-n}\right)}\prod_{i=1}^n
\det\left(Z_i Z_i^\dag  \right)^{\alpha}
\right\}= 
\ee
$$
\sum_\lambda s_\lambda(\bpow)s_\lambda(C_{-1}C_{-2}\cdots C_{-n})
\prod_{i=1}^n \frac{(N+\alpha)_\lambda s_\lambda(C_i) }{(N)_\lambda s_\lambda(\bpow_\infty)}
$$

A similar model was studied in \cite{Chekhov-2014,ChekhovAmbjorn}.
It has the determinantal representation (\ref{tau(nXp)}) in case all $W_i^*$ except one are of form
(\ref{I-N-k}). There is the determinantal representation (\ref{tau(nXY)}) in case we specialize
the set $\bpow$ according to Lemma~\ref{specializations} and choose each $W_i^*$ except two
be in form (\ref{I-N-k}).

Now, let us choose the dual graph (this is petel graph. (see Figure \ref{fig:figure4}b where $n=3$))
and consider
\be
\E_{n,0}\left\{\det\left(1-z Z_1^\dag C_{-1}Z_2^\dag C_{-2}\cdots Z_n^\dag C_{-n}\right)^{-a}
\prod_{i=1}^n \det\left(Z_i Z_i^\dag  \right)^{\alpha} 
e^{\sum_{m>0}\frac 1m p_m^{(i)} \tr\left((Z_iC_i)^m \right) }
\right\}=
\ee
$$
\sum_\lambda z^{|\lambda|}(a)_\lambda s_\lambda(C_1C_{-1}C_2C_{-2}\cdots C_nC_{-n})
\prod_{i=1}^n \frac{(N+\alpha)_\lambda s_\lambda(\bpow^i)}{(N)_\lambda s_\lambda(\bpow_\infty)}
$$

By Remark \ref{simplifications} we find all cases where the determinantal representations 
(\ref{tau(npp)}) or (\ref{tau(nXp)}) exist.

\br\label{powers-symmetry} Notice the following symmetry: the left-hand side produces the same
right-hand side if we permute the set of exponents $\alpha_1,\dots,\alpha_n,a-N$.

\er

\ex

\bx\label{E9} Below $g=1,2,\dots.$ (For the case $g=1$ see Figure \ref{fig:figure1}a and 
zoomed Figure \ref{fig:figure5}a; for $g=2$ see
Figure~\ref{fig:figure4}e).
$$
\E_{0,2g}\left\{e^{\sum_{m>0}\frac 1m p_m \tr\left(
U_{a_1}C_{a_1}U_{b_1}C_{b_1}U_{a_1}^\dag C_{-a_1} U_{b_1}^\dag C_{-b_1}\cdots
U_{a_g}C_{a_g} U_{b_g}C_{b_g}U_{a_g}^\dag C_{-a_g}U_{b_g}^\dag C_{-b_g}\right)^m}\right\}
$$
$$
=\sum_{\lambda} \left(\frac{{\rm dim}\,\lambda}{|\lambda|!}\right)^{-2g}
s_\lambda(W^*)s_\lambda(\bpow)
$$
where
$$
W^*=C_{-a_1}C_{b_1}C_{a_1}C_{-b_1}\cdots  C_{-a_g}C_{b_g}C_{a_g}C_{-b_g}
$$

In particular, if ${\rm Spect}\,W^*=\mathbb{I}_{N,k}$, then
$$
 \E_{0,2g}\left\{\det\left(\mathbb{I}_N -zU_{a_1}C_{a_1}U_{b_1}C_{b_1}U_{a_1}^\dag C_{-a_1} U_{b_1}^\dag C_{-b_1}\cdots
U_{a_g}C_{a_g} U_{b_g}C_{b_g}U_{a_g}^\dag C_{-a_g}U_{b_g}^\dag C_{-b_g}\right)^{-a}\right\}
$$
\be\label{beta-ensemble}
=\sum_{\lambda} z^{|\lambda|} \left(\frac{{\rm dim}\,\lambda}{|\lambda|!}\right)^{2-2g}
\frac{(a)_\lambda (k)_\lambda}{\left((N)_\lambda  \right)^{2g}}
\ee

Taking into account that
$$
{\rm dim}\,\lambda=\frac{\prod_{i<j}(h_i-h_j)}{\prod_{i=1}^N\Gamma(h_i+1)}
$$
we can interpret that (\ref{beta-ensemble}) is a discrete beta-ensemble where $\beta=2-2g$.
\ex

{Exotic models. An example.} There are some more tricky problems which can be solved which can be solved in steps.
Let me consider the simplest example. Look at Lemma \ref{Lemma-s-s-Z}.
Suppose $A$~and~$B$ depend in any way on an additional matrix $Z_1$;
in any case, however, their product has a familiar form:
\be 
 A=A(Z_1,Z^\dag_1),\quad B=B(Z_1,Z^\dag_1),\quad AB=Z_1C_1Z^\dag_1C_{-1}
\ee

Say, $A=Z_1^{a}e^{-Z_1^\dag},\,B=e^{Z_1^\dag}Z_1^\dag Z_1^{1-a} $ which looks horrible.
However, applying sequentially the series ${_0F}_0$, then  (\ref{s(ZA)s(ZB)}),
where $Z=Z_1$, and
then (\ref{s(ZAZB)}), where $Z=Z_2$  one obtains
\be
\E_{2,0} 
\left\{ 
e^{\ttr\left(Z_1Z_2^{a}e^{-Z_2^\dag}\right) +
\ttr\left(Z^\dag_1e^{Z_2^\dag}Z_2^\dag Z_2^{1-a} \right)} 
\right\}=\sum_{d\ge 0}\sum_{\lambda\in\Upsilon_d} s_\lambda(\mathbb{I}_N)s_\lambda(C)
=\det\left(\mathbb{I}_N - C\right)^{-1}
\ee

It will be interesting to do the same with other ensembles of random matrices; Ginibre ensembles of
real and quaternionic matrices; and ensembles of Hermitian matrices: complex, real and quaternionic.

\section{Discussion}

In this article, we examined matrix integrals of a certain type. We called them matrix models associated with 
children's drawings---the so-called dessin d'enfants. They include some well-known models that have found 
applications in the theory of information transfer and the theory of quantum~chaos. We hope that our matrix 
integrals will be in 
demand. We think that these models are related to quantum integrable systems \cite{Migdal,Rusakov,Witten},
but this topic is waiting for 
its development; we expect { connections with } 
\cite{Olshanski-19,Olshanski-199,Okounkov-1,Okounkov-19,Okounkov-199,Okounkov-1996,VershikOkounkov,Gerasimov-Shatashvili,Rumanov}.

\vspace{6pt}

\section{Acknowledgments}
The authors are grateful to A.Gerasimov, M.Kazarian, S.Lando, Yu.Neretin, A.Morozov, A.Mironov, S.Natanzon and L.Chekhov for useful discussions.
A.O. is grateful to A.Odzijewicz for his kind hospitality in Bialowezie and to  
E.Strahov, who turned his attention to independent Ginibre ensembles \cite{Ak1,S2,S1}.
A.O. was partially supported by V.E. Zakharov's scientific school
(Program for Support of Leading Scientific Schools), \mbox{by RFBR} grant 18-01-00273a.
N.A. was partially supported by RFBR grant 19-02-00815. D.V. was partially supported by RFBR grant 18-02-01081.

\appendix
\section{Partitions and Schur Functions \label{Partitions-and-Schur-functions-}}

\hl{Let us recall that the characters of the unitary group} $\mathbb{U}(N)$ are labeled by partitions
and coincide with the so-called Schur functions \cite{Mac}.   
A partition 
$\lambda=(\lambda_1,\dots,\lambda_n)$ is a set of nonnegative integers $\lambda_i$ which are called
parts of $\lambda$ and which are ordered as $\lambda_i \ge \lambda_{i+1}$. 
The number of non-vanishing parts of $\lambda$ is called the length of the partition $\lambda$, and will be denoted by
 $\ell(\lambda)$. The number $|\lambda|=\sum_i \lambda_i$ is called the weight of $\lambda$. The set of all
 partitions will be denoted by $\mathbb{P}$.

The Schur function labelled by $\lambda$ may be defined as  the following function in variables
$x=(x_1,\dots,x_N)$ :
\be\label{Schur-x-}
 s_\lambda(x)=\frac{\det \left[x_j^{\lambda_i-i+N}\right]_{i,j}}{\det \left[x_j^{-i+N}\right]_{i,j}}
 \ee
 in case $\ell(\lambda)\le N$ and vanishes otherwise. One can see that $s_\lambda(x)$ is a symmetric homogeneous 
 polynomial of degree $|\lambda|$ in the variables $x_1,\dots,x_N$, and $\deg x_i=1,\,i=1,\dots,N$.
  
 \br\label{notation} In case the set $x$ is the set of eigenvalues of a matrix $X$, we also write $s_\lambda(X)$ instead
 of $s_\lambda(x)$.
 \er

 There is a different definition of the Schur function as a quasi-homogeneous, non-symmetric polynomial of degree $|\lambda|$ in 
 other variables, the so-called power sums,
 $\bpow =(p_1,p_2,\dots)$, where~$\deg p_m = m$.
 
For this purpose let us introduce 
$$
 s_{\{h\}}(\mathbf p)=\det[s_{(h_i+j-N)}(\mathbf p)]_{i,j},
$$
where $\{h\}$ is any set of $N$ integers, and where
the Schur functions $s_{(i)}$ are defined by $e^{\sum_{m>0}\frac 1m p_m z^m}=\sum_{m\ge 0} s_{(i)}(\bpow) z^i$.
If we put $h_i=\lambda_i-i+N$, where $N$
is not less than the length of the partition $\lambda$; then
\begin{equation}\label{Schur-t}
 s_\lambda(\mathbf p)= s_{\{h\}}(\mathbf p).
\end{equation}

 The Schur functions defined by (\ref{Schur-x}) and by (\ref{Schur-t}) are equal,  $s_\lambda(\bpow)=s_\lambda(x)$, 
 provided the variables $\bpow$ and $x$ are related by the power sums relation
  \be
\label{t_m}
  p_m=  \sum_i x_i^m
  \ee
  
  In case the argument of $s_\lambda$ is written as a non-capital fat letter  the definition (\ref{Schur-t}),
  and we imply the definition (\ref{Schur-x}) in case the argument is not fat and non-capital letter, and
  in case the argument is capital letter which denotes a matrix, then it implies the definition (\ref{Schur-x}) with $x=(x_1,\dots,x_N)$ being
  the eigenvalues.
  
  It may be easily checked that
  \be\label{p-to-p-in-Schur}
  s_\lambda(\bpow)=(-1)^{|\lambda|}s_{\lambda^{\rm tr}}(-\bpow)
  \ee
  where $\lambda^{\rm tr}$ is the partition conjugated to $\lambda$ (in \cite{Mac} it is denoted by $\lambda^*$). The Young diagram
  of the conjugated partition is obtained by the transposition of the Young diagram of $\lambda$ with respect to its main diagonal. 
  One gets $\lambda_1=\ell(\lambda^{\rm tr})$.

  \section{Integrals over the Unitary Group\label{IOUG}}
 Consider the following integral over the unitary group which depends on two semi-infinite sets of parameters
 $\bpow=(p_1,p_2,\dots )$ and $\bbpow=(p_1^*,p_2^*,\dots ) $: 
  \be
I_{\mathbb{U}(N)}(\bpow,\bbpow):= \int_{\mathbb{U}(N)} 
e^{\tr V\left(\bpow ,U\right) + \tr V\left(\bpow^*,U^{-1}\right)}  d_*U=
 \ee
 \be
\frac{1}{(2\pi )^N} 
\int_{0 \le \theta_1 \le  \dots \le \theta_N\le 2\pi} 
\prod_{1\le j<k\le N}\vert e^{i\theta_j}-e^{-i\theta_k} \vert ^2 
 \prod_{j=1}^N e^{\sum_{m>0}\frac 1m \left(p_me^{im\theta_j} +p_m^* e^{-im\theta_j}\right)}d\theta_j
 \ee
 \be\label{V}
 V(\bpow,x):= \sum_{n>0} \frac 1n p_n x^n
 \ee  
 
 Here $d_*U$ is the Haar measure of the group $\mathbb{U}(N)$:
\be\label{Haar-unitary}
 d_*U =\frac{1}{(2\pi )^N}  
\prod_{1\le j<k\le N}\vert e^{i\theta_j}-e^{-i\theta_k} \vert ^2 
 \prod_{j=1}^N d\theta_j\,,\quad -\pi \le \theta_1<\dots\theta_N\le \pi
 \ee
 and
 $e^{i\theta_1},\dots,e^{i\theta_N }$ are the eigenvalues of $U\in  \mathbb{U}(N)$. The exponential factors
 inside the integral may be treated as a perturbation of the Haar measure and parameters $\bpow, \, \bpow^*$
 are called coupling constants by the analogy with quantum field theory problems.

 Using  the Cauchy-Littlewood identity
 \be
\label{CL}
  \tau(\bpow|\bpow^*):=e^{\sum_{m=1}^\infty \frac 1m p_m^*p_m}=\sum_{\lambda\in \mathbb{P}} s_\lambda(\bpow^*)s_\lambda(\bpow)
  \ee
 and the orthogonality of the irreducible characters of the unitary group
  \be\label{orthonormality-ch-U}
  \int s_\lambda(U)s_\mu(U^{-1})d_*U = \delta_{\lambda,\mu}
  \ee
 we obtain that
 \be\label{Morozov}
 I_{\mathbb{U}(n)}(\bpow,\bbpow) = \sum_{\lambda\in\mathbb{P}\atop
 \ell(\lambda)\le n} s_\lambda(\bpow) s_\lambda(\bbpow)
 \ee
which express the integral over unitary matrices as the "perturbation series in coupling constants."

The formula (\ref{Morozov}) first appeared in \cite{MirMorSem} in the context of the study of Brezin--Gross--Witten model.
It was shown there that the integral $ I_{\mathbb{U}(n)}(\bpow,\bbpow)$ may be related to 
the Toda lattice tau function
of \cite{JM,UT} under certain restriction. Then, the series in the Schur functions (\ref{Morozov}) may be related 
to the double Schur functions series found in \cite{Takasaki,Takebe}.

\section{Geometrical Definition of Hurwitz Numbers \label{Hurwitz-geometric-section}}

In this presentation, we follow article \cite{NO2020}.

The Hurwitz number is a characterisation of the branched covering of a surface with critical values of a 
prescribed topological type.  
Hurwitz numbers of oriented surfaces without boundaries were introduced by Hurwitz
at the end of the 19th century.
Later it turned out that they are closely related to the study of moduli spaces of Riemann
surfaces \cite{ELSV}, to  integrable systems \cite{Okounkov-2000}, to modern models of mathematical physics
(matrix models) and to closed topological field theories \cite{Dijkgraaf}. In this paper we  consider 
only Hurwitz numbers of compact surfaces without boundary. 

 Consider a branched covering $f:P\rightarrow\Sigma$ of degree $d$ of a compact surface
without boundary. In the neighborhood of each point $z\in P$, the map $f$ is topologically equivalent to the
complex map $u\mapsto u^p$, defined on a neighborhood  $u\sim0$ in $\mathbb{C}$. The number 
$p=p(z)$ is called the degree of the
covering $f$ at the point $z$. The point $z\in P$ is said to be a \textit{branch point} or  
\textit{critical point} if $p(z)\neq 1$. There~are only a finite number of critical points. 
The image $f(z)$ of a critical point $z$ is 
called the \textit{critical value} of $f$ at $z$.

Let us associate with a point $s\in\Sigma$ all points  $z_1,\dots,z_\ell\in P$ for which $f(z_i)=s$. Let
$p_1, \dots,p_\ell$ be the degrees of the map $f$ at these points. Their sum $d=p_1 +\dots+p_\ell$ is equal
to the degree $d$ of $f$. Thus, to each point $s\in S $ there corresponds a partition $d=p_1 +\dots+p_\ell$ of
the number $d$. 
Having ordered the degrees $ p_1 \geq \dots \geq p_\ell> 0 $ at each point $ s \in \Sigma $, 
we  introduce the Young diagram $ \Delta^s = [p_1, \dots, p_ \ell ] $ of weight $ d $ with
$ \ell = \ell (\Delta^s) $ rows of length $ p_1 \dots, p_\ell $ :
 $ \Delta^s $ is called the 
\textit{topological type} of the value $ s $, and 
 $ s $ is a critical value of $f$ if and only if at least one of the row-lengths $ p_i $
 is greater than $ 1 $.)

Let us note that the Euler characteristics $\e(P)$ and $\e(\Sigma)$ of the surfaces $P$ and $\Sigma$ are related
via the Riemann--Hurwitz relation:
\[
\e(P)=\e(\Sigma)d +\sum\limits_{z\in P} \left(p(z)-1\right)
\]
or, equivalently,
\be\label{RHur}
\e(P)=\e(\Sigma)d +\sum\limits_{i=1}^\f \left( \ell(\Delta^{s_i})-d\right).
\ee
where $s_1,\dots,s_{\f}$ are critical values.

We say that coverings $ f_1: P_1 \rightarrow \Sigma $ and $ f_2: P_2 \rightarrow \Sigma $ 
are \textit{equivalent}
if there exists a homeomorphism $ F: P_1 \rightarrow P_2 $ such that $ f_1 = f_2F $;
in case $P_1=P_2$ and $f_1 =f_2$ the homeomorphism $F$ is called an
\textit{automorphism of the covering}. 
The set of all automorphisms of a covering $f$ form the group $\texttt{Aut}(f)$
of finite order $|\texttt{Aut}(f)|$. Equivalent coverings have isomorphic automorphism groups.

We present two illustrative examples.

Example 1. Let $ \Sigma = \overline {\mathbb{C}} = \{z \in \mathbb{C} \} \bigcup \infty $,
$ P = P_1 = P_2 = \overline {\mathbb{C}} = \{u \in \mathbb {C} \} \bigcup \infty $  be Riemann spheres.
Consider the branched covering $ z(u) = f (u) = f_1 (u) = f_2 (u) = u^3 $.
This covering $ f: P \to \Sigma $ has 2 critical values 0 and $ \infty $
with Young diagrams from one row of length 3.
Automorphisms of the covering have the form $ F(u) = u^{\sqrt [3] {1}} $.
The group $ \texttt{Aut} (f) $ 
is isomorphic to $ \mathbb{Z} / 3 \mathbb{Z} $.

Example 2. Let $ \Sigma = \overline{\mathbb {C}} = \{z \in \mathbb{C} \} \bigcup \infty $ and
$ P = P_1 = P_2 $ - this is a pair of Riemann spheres; 
that is $ P=P' \bigcup P$."
where $ P' = \{ u'
\in \mathbb{C} \} \bigcup \infty $ and 
$P'' =\{ u''
 \in \mathbb{C} \} 
\bigcup \infty $.
Consider the branched covering $ z(u') = f (u') = f_1 (u') = f_2 (u') = (u')^3 $,
$ z(u'') = f(u'') = f_1 (u'') = f_2 (u'') = (u'')^3 $.
This~covering $ f: P \to \Sigma $ has two critical values 0 and $ \infty $
with Young diagrams of two rows of length 3.
Automorphisms of the covering are generated by the following mappings:

    1. $F(u')=(u')^{\sqrt[3]{1}}$, $F(u'')=u''$.
    
    2. $F(u'')=(u'')^{\sqrt[3]{1}}$, $F(u')=u'$.
    
    3. $F(u')=(u'')$, $F(u'')=u'$.
    
    The group $ \texttt{Aut}(f_i) $ is isomorphic to 
$(\mathbb{Z}/3\mathbb{Z})\bigotimes(\mathbb{Z}/3\mathbb{Z})\bigotimes(\mathbb{Z}/2\mathbb{Z})$.

From now on, unless indicated otherwise, we will assume that the surface $\Sigma$ is connected. Let~us choose 
points $ s_1, \dots, s_\f \in \Sigma $ and corresponding Young diagrams $ \Delta^1, \dots, \Delta^\f $ of 
weight $ d $. 
Let $ \Phi $ be the set of  equivalence classes of the coverings for which $ s_1, \dots, s_\f $ is the set 
of all critical values, and~$ \Delta^1, \dots, \Delta^\f $ are the topological types of these critical values.
The \textit{Hurwitz number} 
is the number
\be\label{disconH} 
H_{\Sigma}^d(\Delta^1,\dots,\Delta^\f)=\sum_{f\in\Phi} 
\frac {1} {|\texttt{Aut} (f)|}.
\ee

It is easy to prove that the Hurwitz number is independent of the positions of the points $s_1, \dots, s_\f$ 
on $\Sigma$.  One can show that the right-hand side of (\ref{disconH})
 depends only on the Young diagrams of $\Delta^1,\dots,\Delta^\f$ and the Euler characteristic $\e=\e(\Sigma)$.
 Because of this sometimes we write $H_{\e(\Sigma)}^d(\Delta^1,\dots,\Delta^\f)$ instead of 
 $H_{\Sigma}^d(\Delta^1,\dots,\Delta^\f)$.

\vspace{1ex}

If $\f=0$ we get an unbranched covering. We denote such Hurwitz number $H_\e\left((1^d)\right)$.

Example 3.
Let $f:\Sigma\rightarrow\mathbb{RP}^2$ be a covering without critical points.
Then, if $\Sigma$ is connected, then $\Sigma=\mathbb{RP}^2$,
$\deg f=1$\quad or $\Sigma=S^2$, $\deg f=2$. Therefore if $d=3$, then
$\Sigma=\mathbb{RP}^2\coprod\mathbb{RP}^2\coprod\mathbb{RP}^2$ or $\Sigma=\mathbb{RP}^2\coprod S^2$.
Thus $H_{1}\left((1^3)\right)=\frac{1}{3!}+\frac{1}{2!}=\frac{2}{3}$.

\section{Combinatorial Definition of Hurwitz Numbers \label{Hurwitz-combinatorial-section}}

Consider the symmetric group (equivalently, the permutation group) $S_d$ and the equation
\be\label{Hurwitz-combinatorial}
\sigma_1\cdots \sigma_\f \rho_1^2\cdots \rho_\textsc{m}^2\alpha_1\beta_1\alpha_1^{-1}\beta_1^{-1}\cdots
\alpha_{\textsc{h}}\beta_{\textsc{H}}\alpha_{\textsc{h}}^{-1}\beta_{\textsc{h}}^{-1}=1,
\ee
where $\sigma_1,\cdots , \sigma_\f, \rho_1,\cdots, \rho_\textsc{m},\alpha_1,\beta_1,
\dots,\alpha_{\textsc{H}},\beta_{\textsc{h}} \in S_d$, and moreover $\sigma_i \in C_{\Delta^i},\, i=1,\dots,\f$, 
where $C_{\Delta^i}$ is the conjugacy class labeled by a partition $\Delta^i=\left(\Delta^i_1,\Delta^i_2,\dots \right)$.
The Hurwitz number is the number of solutions of Equation (\ref{Hurwitz-combinatorial}) divided by $d!$ (by the order
of $S_d$). 

It can be proved that so introduced
the (combinatorial) Hurwitz number coincides with the (geometric) Hurwitz number 
$H_\e(\Delta^1,\dots,\Delta^\f)$ introduced in
Appendix \ref{Hurwitz-geometric-section}   where  $\e=2-2\textsc{h}-\textsc{m}$.
(One~can look at the base surface $\Sigma$ as a result of gluing $\textsc{h}$ handles and $\textsc{m}$ 
M\"obius stripes to a sphere.

Consider the simplest example: $\textsc{h}=0$ and $\textsc{m}=1$; that is $\Sigma=\mathbb{RP}^2$
(real projective plane). Suppose~$\f=0$; that is we deal with an unbranched covering. Suppose
$d=3$; that is we consider 3-sheeted covering. Let us solve $\rho^2=1$, where $\rho\in S_3$. One gets
4 solutions: 3 transpositions of the set $1,2,3$ and one identity permutation. There are $3!$
permutations in $S_3$. As a result we get $H_1\left((1^3)\right)=4/3!=2/3$ as we got in the last example
of the previous section.

In the same way one can consider Example 1 of the previous section. In this case $\textsc{h}=\textsc{m}=0$;
that is $\e=2$; one gets the Riemann sphere with two branch points ($\f=2$) and 3-sheeted covering
with profiles 
$\Delta^1=\Delta^2=(3)$. We solve the equation
$\sigma_1\sigma_2=1$, where both $\sigma_{1,2}$ consist of a single cycle of length 3. There are two
solutions $\sigma_1=\sigma_2^{-1}$: one sends $1,2,3$ to $3,1,2$, the other sends $1,2,3$ to $2,3,1$.
We get $H_2\left((3),(3)\right)=2/3!=1/3$.

Example 2 corresponds to $H_2\left((3,3),(3,3) \right)$, $d=6$. One can complete the exercise and get \mbox{an answer} 
$H_2\left((3,3)\right)=1/z_{(3,3)}=1/18$, where $z_\lambda$ is given by (\ref{z-lambda}). Actually, for any $d$ 
and for any pair of profiles one gets $H_2\left(\Delta^1,\Delta \right)=\delta_{\Delta^1,\Delta}1/z_\Delta$.

In \cite{M1,M2} (and also in \cite{GARETH.A.JONES}) it was found that $H_\e(\Delta^1,\dots,\Delta^\f)$
is given by formula (\ref{Mednyh}).

\section{Differential Operators\label{DiffOp}   }

In \cite{NO2020F}
we develop (see also \cite{NO2020F}) the work \cite{MM1},
which offers a beautiful generalization of the cut-and-join formula (MMN formula):
\be\label{MM3'}
{\cal W}^{\Delta}(\bpow)\cdot s_\mu(\bpow)=\varphi_\mu(\Delta)s_\mu(\bpow),
\ee
which describes the merging of pairs of branch points in the covering problem.
Here $\mu=(\mu_1,\mu_2,\dots)$ and $\Delta=(\Delta_1,\dots,\Delta_\ell )$  are Young diagrams 
($\Delta$ is  the ramification profile of one of branch points;
for~simplicity, we consider the case where $|\mu|=|\Delta|$),
$s_\mu$ is the Schur function.
Differential operators $ {\cal W}^{\Delta} (\bpow) $ generalize the operators of "additional symmetries"
\cite{O-1987} in the theory of solitons and commute with each other for different $ \Delta $.
In the work (\cite{MM3}), it was noted that if they are written in the so-called Miwa variables, that is,
in terms of the eigenvalues of the matrix $X$ such that 
$ p_m = \tr \left( X^m \right) $; then
the generalized cut-and-join formula  is written very compactly and beautifully:
\be\label{MM3}
{\cal W}^{\Delta}\cdot s_\mu(X)=\varphi_\mu(\Delta)s_\mu(X)
\ee
where
\be
{\cal W}^{\Delta}=\frac{1}{z_\Delta}\tr\left( D^{\Delta_1} \right)\cdots \tr\left( D^{\Delta_\ell}\right),
\ee
and the factor $z_\Delta$ is given by (\ref{z-lambda}), $D$ is
\be
D_{a,b}=\sum_{c=1}^N X_{a,c}\frac{\partial}{\partial X_{b,c}}
\ee
As G.I. Olshansky pointed out to us, this type of formula appeared in the works of Perelomov and Popov
\cite {PerelomovPopov1,PerelomovPopov2,PerelomovPopov3}
and describe the actions of the Casimir operators in the representaion $ \lambda $; see
also \cite{Zhelobenko}, Section 9.

We propose a generalization of this relation, which in our case is constructed
using a child's drawing of a constellation (dessin d'enfants, or a map in terminology \cite{ZL}.
In fact, we are considering a modification in which the vertices are replaced by small disks---"stars"). This topic will be
be studied in more detail in the next article. 
Here we restrict ourselves only to a reference to important  beautiful works
\cite{Olshanski-19,Olshanski-199,Okounkov-1,Okounkov-19,Okounkov-199}.

(i)
One can interpret the Gaussian integral as the integral of $ n $ -component two-dimensional charged bosonic fields
$Z_i$ and $Z^\dag_i$:
$$
\int (Z^\dag_i)_{a,b}(Z_j)_{b',a'}
d\Omega =
<(Z^\dag_i)_{a,b}(Z_j)_{b',a'}>=\frac 1N \delta_{a,a'}\delta_{b,b'}\delta_{i,j},
$$
for $i,j=1,\dots,n,\quad a,b=1,\dots, N$.

The Fock space of these fields is all possible polynomials from the matrix elements of the matrices
$Z_1,\dots,Z_n$.

The operators $ (Z_i)_{ab} $ can be considered creation operators, and the operators
\be\label{correspondence}
({Z}^\dag_i)\,\to\,\frac 1N \partial_i,\quad   (\partial_i)_{a,b}=\frac 1N \frac{\partial}{\partial Z_{b,a}}
\ee
ellimination operators that act in this space.
The integrands in our integrals should be considered anti-ordered, that is, all ellimination operators (all derivatives)
considered to be moved to the left, while the matrix structure is considered to be preserved. We will denote
this is anti-ordering of some $ A $ by the symbol $ :: A :: $, where $ A $ is a polynomial of matrix elements
of the matrices

From this point of view, on different sides of the ribbon of $\tig$ with the number $i$ we place the canonically 
conjugated coordinates $Z_i$ and momenta $\partial_{Z_i}$.

We recall that all partions throught the paper have the same weight $d$.

(ii)
Then, for example, the relation we get  
\begin{equation}\label{Operator-tau}
:: \tau_g(M_1,\dots,M_{\f})::
=\sum_{d=0}^\infty
\sum_{{\Delta}^1,\dots , {\Delta}^{\V}\atop |\Delta^1|=\cdots = |\Delta^{\V}|=d} 
H_{\Sigma'}(g|{\Delta}^1,\dots , {\Delta}^{\V})
{C}({\Delta}^1,\dots , {\Delta}^{\V} ),
\end{equation}

We give another relation:
\be
\left(::\prod_{i=1}^\f s_{\lambda^i}(M_i)::\right)\cdot 1 = 
\delta_{\lambda^1,\dots,\lambda^\f} \left(\frac{{\rm dim}\,\lambda}{d!}  \right)^{-n} \prod_{i=1}^\V s_\lambda(W_i^*)
\ee
where $\lambda^1,\dots,\lambda^\f=\lambda$  is a set of Young diagrams, and where $\delta_{\lambda^1,\dots,\lambda^\f}$ 
is equal 1, if $\lambda^1=\dots =\lambda^\f=\lambda$ and is equal to 0 otherwise. 

It looks like a simple rewrite, but can be helpfully used. Let us derive a beautiful formula 
(Theorem 5.1 in \cite{MM3}), namely (\ref{MM3}) (and see also 
articles \cite{Olshanski-19,Olshanski-199,Okounkov-1,Okounkov-19,Okounkov-199,Okounkov-1996}).

In order to do this we should use the freedom to choose the source matrices:

(iii)
For a partition $\Delta=(\Delta_1,\dots,\Delta_\ell)$ and a face monodromy $M_i$ and a star monodromy $W^*_i$,
let us introduce notations
\be
{\cal M}_i^{\Delta^i}=\tr \left((M_i)^{\Delta^i_1}\right)\cdots \tr \left((M_i)^{\Delta^i_\ell}\right)
\ee
\be
{\cal C}_i^{\Delta^i}=\tr \left((W_i^*)^{\Delta^i_1}\right)\cdots \tr \left((W^*_i)^{\Delta^i_\ell}\right)
\ee

We can write
\be\label{MM1-all'''matrices}
N^{nd} \int \left[ \prod_{i=1}^\f \frac{{\cal M}_i^{\tilde{\Delta}^i}}{z_{\tilde{\Delta}^i}}
 \right]d\Omega
\ee
\be 
 =\sum_{\Delta^1,\dots,\Delta^\V}
 H_\Sigma(\tilde{\Delta}^1,\dots,\tilde{\Delta}^\f,\Delta^1,\dots,\Delta^\V)
\left[ \prod_{i=1}^\V {\cal C}_i^{\Delta^i} \right]
\ee

Let us write the most general generating function for Hurwitz numbers which was obtained in~\cite{NO2020}:
\be
\label{full-generating function}
N^{nd}\int \left[ \prod_{i=1}^{\f_1} \frac{{\cal M}_i^{\tilde{\Delta}^i}}{z_{\tilde{\Delta}^i}}
 \right] 
 \left(
\prod_{i=\f_1+1}^{\f_2}{\mathfrak{M}}(M_i) \right)
\left(\prod_{i=\f_2+2,\f_2+4,\dots}^{\f}
{\mathfrak{H}}(M_{i-1},M_{i})\right)
  d\Omega
\ee
\be \label{full-rhs}
=
\sum_{{\Delta}^1,\dots , {\Delta}^{\V}\atop |\Delta^1|=\cdots = |\Delta^{\V}|=d} 
H_{\tilde{\tilde\Sigma}}(\tilde{\Delta}^1,\dots , \tilde{\Delta}^{\f_1},{\Delta}^1,\dots , {\Delta}^{\V})
{C}({\Delta}^1,\dots , {\Delta}^{\V} ),
\ee
where
\begin{myequation}
\label{Hurwitz-general}
 H_{\tilde{\tilde\Sigma}}(\tilde{\Delta}^1,\dots , \tilde{\Delta}^{\f_1},{\Delta}^1,\dots , {\Delta}^{\V})=
 \sum_{\lambda\in {\cal P}} 
 \left( \frac{{\rm dim}\,\mu}{d!} \right)^{\f-n+V-2h-m}
 \varphi_\mu(\tilde{\Delta}^1)\cdots \varphi_\mu(\tilde{\Delta}^{\f_1})
 \varphi_\mu({\Delta}^1)\cdots \varphi_\mu({\Delta}^{\V})
\end{myequation}
where $h=\frac12 (\f -\f_2)$ is the number of handles and $m=\f_2-\f_1$ is the number of
Moebius stripes glued to $\Sigma$ where the graph $\tig$  (modified dessin d'enfants) was drawn.
The Euler characteristic of ${\tilde{\tilde\Sigma}}$ is $\f-n+V-2h-m=\f_1-n+\V$. Hurwitz number
(\ref{Hurwitz-general}) counts the coverings of ${\tilde{\tilde\Sigma}}$ with branching profiles
$\tilde{\Delta}^1,\dots , \tilde{\Delta}^{\f_1},{\Delta}^1,\dots , {\Delta}^{\V}$.

Let us multiply the both sides of (\ref{full-generating function}) by
$$
 \prod_{i=k+1}^{\f_1}\frac{{\rm dim}\,\mu^i}{d!}\varphi_{\mu^i}(\tilde{\Delta}^i) z_{\tilde{\Delta}^i}
$$
(where $k\le\f$), and then sum the both sides (\ref{full-generating function}) and (\ref{full-rhs}) over 
$\tilde{\Delta}^i,\,i=k+1,\dots,\f_1$, taking into~account 
\be
s_\mu(X)=\frac{{\rm dim}\,\mu}{d!}\sum_\Delta\varphi_{\mu}(\Delta){\cal X}^\Delta ,
\ee
when evaluating  (\ref{full-generating function}),
where
\be
{\cal X}^{\Delta}=\tr \left((X)^{\Delta_1}\right)\cdots \tr \left((X)^{\Delta_\ell}\right)
\ee
and the orthogonality relation (\ref{orth2}) when evaluating  (\ref{full-rhs}). We obtain
\begin{myequation}
\label{MM1-all-klein-handle''}
\int 
N^{nd} \left[ \prod_{i=1}^k \frac{{\cal M}_i^{\tilde{\Delta}^i}}{z_{\tilde{\Delta}^i}}
 \right] 
 \left(
\prod_{i=k+\mc+1}^{k+\mc+m}{\mathfrak{M}}(M_i) \right)
\left(\prod_{i=k+\mc+m+2,k+\mc+m+4,\dots}^{\f}
{\mathfrak{H}}(M_{i-1},M_{i})\right)
\prod_{i=k+1}^{k+\mc} s_{\mu^i}(M_i)
  d\Omega
\end{myequation}
\be
=
\delta_{\mu^{k+1},\dots,\mu^{\f-k} }
 \left( \frac{{\rm dim}\,\mu}{d!} \right)^{k-n-2h-m}
 \varphi_\mu(\tilde{\Delta}^1)\cdots \varphi_\mu(\tilde{\Delta}^k)
\left[ \prod_{i=1}^\V s_\lambda(W_i^*)) \right]
\ee
where $2h=\f-\f_1-m$ $\mc$

\br\label{chess-desk}
 Suppose that the edges of the graph $\Gamma$ can be painted like a chessboard in black and white faces so that 
 the face of one color borders only the faces of a different color. Then the matrices from the set $\{ Z^\dag\}$ 
 (i.e.,~differential operators) can be assigned to the sides of the edges of white faces, that is, the matrices 
 from the set $\{Z\}$ to the sides of black faces. In this case, the monodromies of the white faces will be those 
 differential operators which will act on the monodromy of black faces.
 
\er

The most natural and simple case is the following "polarization": Suppose that $M_i,\,i=k+1,\dots,\f_1$ 
are black faces and the rest part of the face monodromies $M_i$ are while faces (see Remark \ref{chess-desk}).

(I) Let $m=h=0$.
Take as a graph $\tig$ a child's drawing - sunflower with $n$ white petals drawn on the background of black night sky.
See (b) in the Figure \ref{fig:figure4} for $\Gamma$ with 3 petals as an example. 
There is 1 vertex of $\Gamma$
which inflated and we get a small disk as the center of sunflower. We have $n+1$ faces of 
$\Gamma$: $n$ petals, and the big 
and a big face, embracing all the petals and containing infinity.
Then we place all "momentums" inside the petals:
$$
M_i=C_{-i}Z_i^\dag,\,i=1,\dots,n .
$$

Then all "coordinates" (the collections of $\{(Z_i)_a,b\}$) are placed on 
the other side 
of the ribbons, they are places along the boundary of the big embracing black face: 
$$
M_{n+1}=Z_1C_1\cdots Z_nC_n
$$

See Figure \ref{fig:figure4}b as an example.

Let remove the sign tilde above Young diagrams,
then, 
\be\label{petal-integral}
\int N^{nd} 
\left[ \prod_{i=1}^n 
\frac{{\cal M}_i^{\Delta^i}} {z_{\Delta^i}}
 \right]  
 s_{\mu}\left( Z_1C_1\cdots Z_nC_n \right)  d\Omega= 
 \varphi_\mu(\Delta^1)\cdots \varphi_\mu(\Delta^n) s_\mu(C_{-1}C_1\cdots C_{-n}C_n)
\ee

It is equivalent to
\be\label{vector-fields-action}
{\cal W}_{C_{-1}}^{\Delta^1}\cdots {\cal W}_{C_{-n}}^{\Delta^n}  \cdot s_\mu(Z_1C_1\cdots Z_nC_n)=
\varphi_\mu(\Delta^1)\cdots \varphi_\mu(\Delta^n) s_\mu(C_{-1}(Z_1)C_1\cdots C_{-n}(Z_n)C_n),
\ee
where each $ N \times N $  matrix $ C_{-i} $  can now depend, for example, polynomially on
 $Z_i$, $i=1,\dots, n$ and~where
\be
{\cal W}_{C_{-i}}^{\Delta^i}=
:\tr\left( ( C_{-i}(Z_i)\partial_i  )^{\Delta^i_1} \right)\cdots 
\tr\left( ( C_{-i}(Z_i)\partial_i  )^{\Delta^i_\ell} \right):\,,
\ee
where each $C_{-i}\partial_i $ is a matrix whose entries are differential operators, more precisely, are
the following vector fields:
\be
(C_{-i}\partial_i)_{a,b}:=\sum_{c=1}^N (C_{-i})_{a,c}\frac{\partial}{\partial (Z_i)_{b,c}}
\ee

The normal ordering indicated by two dots is the same here as in \cite{MM3}---that is,
while maintaining the matrix structure, the derivative operators do not act on
$ C_{-i} = C_{-i}(Z_i) $. Note that the normal ordering procedure is necessary in order  
the Equation (\ref {vector-fields-action}) was equivalent to the equality (\ref{petal-integral})!

The ordering is the same as in \cite{MM3}: keeping the matrix structure the derivatives do not 
act on $C_{-i}=C_{-i}(Z_i)$. 
Notice that the ordering is necessary to relate (\ref{vector-fields-action}) to (\ref{petal-integral})!

If we now take the case $ n = 1 $ (one petal), and in addition, $ C_{- i} = Z_i $, then we get the 
desired formula (\ref {MM3}).

Take another example with the same graph $ \tig $ 
with the same~monodromies. However, let $ k = 0 $.
In this case the integral (\ref{petal-integral}) can be re-written as the~relation
\be\label{handle-mob-Schur}
\hat{\mathfrak M}_1\cdots \hat{\mathfrak M}_m
\hat{\mathfrak H}_1\cdots
 \hat{\mathfrak H}_h \cdot s_\mu(Z_1C_1\cdots Z_nC_n)=
\left(\frac{{\rm dim}\,\mu}{|\mu|!}\right)^{-2h-m-n} s_\mu(C_{-1}(Z_1)C_1\cdots C_{-n}(Z_n)C_n),
\ee
where
\be
\hat{\mathfrak M}_i=:
e^{\frac 12 \sum_{j>0} \frac 1j\left(\tr \left(C_{-i}\partial_i \right)^j\right)^2
+ \sum_{j>0,{\rm odd}}\frac{1}{j}
\tr \left(\left(C_{-i}\partial_i \right)^j\right)}: \,,\quad i=1,\dots,m ,
\ee
\be
 \hat{\mathfrak H}_i= :e^{\sum_{j>0}\frac 1j \tr\left(\left(C_{1-m-2i}\partial_{m+2i-1}\right)^{j}  \right) 
\tr\left(\left(C_{-m-2i}\partial_{m+2i}\right)^{j}  \right) }:\,,\quad i=1,\dots, h.
\ee
\br
When $ C_{- i} = Z_i, \, i = 1, \dots, n $ both equalities (\ref {vector-fields-action}) and 
(\ref {handle-mob-Schur}) describe eigenvalue problems for the corresponding Hamiltonians in the two-dimensional 
bosonic theory.
Perhaps a comparison with the case analyzed by Dubrovin is appropriate. This is the case 
$ n = 1 $, $ \f = 1 $, $ {\cal W}^{(n)} $.
In this case, the operators $ {\cal W}^{(n)} \,, n = 1,2, \dots $ are the dispersionless Hamiltonians
KdV equations \cite{Dubr}.
\er

\br
The case $C_{-i}$ does not depend is also interesting in case the monodromies of the stars are degenerate 
matrices, then the whole intergal is related to the integration over rectanguler matrices. As an example
one can choose $C_{-1}C_1 \cdots C_{-n}C_n$ in (\ref{petal-integral}) as $\diag (1,1,\dots,1,0,0,\dots,0)$.
Then we get the Pochhhamer symbol in the right-hand side which allows one to related the whole integral to the
hypergeometric tau function \cite{OS-2000}. It will be discussed in a more detailed text where we
plan to relate out topic to certain topics in 
 \cite{Olshanski-19,Olshanski-199,Okounkov-1,Okounkov-19,Okounkov-199,Okounkov-1996}.
\er

Another example. $\Gamma$ has 2 vertices which are connected by 4 edges; see Figure \ref{fig:figure3}d. We 
$$
:\bpow_{\Delta^1}(\partial_1C_{-1}\partial_2C_{-2})
\bpow_{\Delta^2}(\partial_3C_{-3}\partial_4C_{-4}):  \left(
s_{\lambda}(Z_1C_1Z_4C_4)s_{\mu}(Z_2C_2Z_3C_3)\right)
$$
$$
= \delta_{\lambda,\mu} \left( \frac{{\rm dim}\,\mu}{d!} \right)^{-2}
\varphi_\mu(\Delta^1)\varphi_\mu(\Delta^2)
s_{\mu}(C_1C_{-4}C_3C_{-2})s_{\mu}(C_4C_{-1}C_2C_{-3})
$$

In particular, if one takes $C_3=C_4=C$ and 
$C_{-1}=C_{-1}(Z_2),\,C_{-2}=C_{-2}(Z_1),\,C_{-3}=C_{-3}(Z_3),\,C_{-4}=C_{-4}(Z_4) $ he gets 
$$
:\bpow_{\Delta^1}(C_{-2}(Z_1)\partial_1 C_{-1}(Z_2) \partial_2 )
\bpow_{\Delta^2}(\partial_3 C_{-3}(Z_{3})\partial_4 c_{-4}(Z_{4})):  
s_{\lambda}(Z_1 C_1 Z_{4}C )s_{\mu}( Z_{2} C_2  Z_3 C  )
$$
$$
= \delta_{\lambda,\mu}  \left( \frac{{\rm dim}\,\mu}{d!} \right)^{-2}
\varphi_\mu(\Delta^1)\varphi_\mu(\Delta^2)
s_{\mu}(C_{-2}(Z_1)C_1C_{-4}(Z_4)C)s_{\mu}(C_{-1}(Z_2)C_2C_{-3}(Z_3)C)
$$

In particular, if one takes $C_3=C_4=C$ and $C_{-1}=Z_2,\,C_{-2}=Z_1,\,C_{-3}=Z_3,\,C_{-4}=Z_4 $ (Euler~fields) 
he gets an  eigenvalue problem:
$$
:\bpow_{\Delta^1}(Z_1\partial_1 Z_2 \partial_2 )
\bpow_{\Delta^2}(\partial_3 Z_{3}\partial_4 Z_{4}):  \left(
s_{\lambda}(Z_1C_1Z_4C)s_{\mu}(Z_2C_2Z_3C)\right)
$$
$$
= \delta_{\lambda,\mu}  \left( \frac{{\rm dim}\,\mu}{d!} \right)^{-2}
\varphi_\mu(\Delta^1)\varphi_\mu(\Delta^2)
s_{\mu}(Z_1 C_1 Z_{4}C )s_{\mu}( Z_{2} C_2  Z_3 C  )
$$

In case $C_{-1}=C_{-2}=C_{-3}=C_{-4}=\mathbb{I}_N$ we get 
$$
:\bpow_{\Delta^1}(\partial_1 \partial_2 )
\bpow_{\Delta^2}(\partial_3 \partial_4 ):  \left(
s_{\lambda}(Z_1 C_1 Z_{4}C_4 )s_{\mu}( Z_{2} C_2  Z_3 C_3  )\right)
$$
$$
= \delta_{\lambda,\mu} \varphi_\mu(\Delta^1)\varphi_\mu(\Delta^2)
s_{\mu}(C_1 C_3)s_{\mu}(C_2 C_4)
$$

Now we consider another example $n=4$ with 
the graph obtained from the graph (a) in the Figures \ref{fig:figure1} and \ref{fig:figure5}  drawn on the torus
by doubling the edges: instead of each edge we draw two ones.
We have $\Gamma$ with one vertex, four edges and three faces
and obtain
$$
 :\bpow_{\Delta^1} \left(\partial_1 C_{-1}\partial_3 C_{-3}  \right)
 \bpow_{\Delta^1} \left(\partial_2 C_{-2}\partial_4 C_{-4}  \right):
 s_\lambda\left( Z_1 C_1  Z_2 C_2 Z_3 C_3   Z_4 C_4 \right)
$$ 
$$ 
 =s_\lambda\left(C_1 C_{-2} C_4 C_{-1} C_3 C_{-4} C_2 C_{-3}  \right)
$$

Take $C_{-1}=C_{-1}(Z_3),\,C_{-2}=C_{-2}(Z_2),\,C_{-3}=C_{-3}(Z_1),\,C_{-4}=C_{-4}(Z_4)$ and $C_2=C_4=C$.
As an example we obtain
$$
 :\bpow_{\Delta^1} \left(Z_1\partial_1 Z_3\partial_3   \right)
 \bpow_{\Delta^1} \left(\partial_2 Z_{2}\partial_4 Z_{4}  \right):
 s_\lambda\left( Z_1 C_1  Z_2 C Z_3 C_3   Z_4 C \right)
$$ 
$$ 
 =\left(\frac{{\rm dim}\,\lambda}{d!}   \right)^{-2}\varphi_\lambda(\Delta^1)\varphi_\lambda(\Delta^2)
 s_\lambda\left(Z_1 C_1 Z_{2} C Z_{3} C_3 Z_{4} C   \right)
$$

(iv)  Let us notice that if we take a dual graph to the sunflower graph with $n=1$ (dual to one petal~$\Gamma$, which is just
a line segment; see Figure \ref{fig:figure2}a,b), in this case we have one face and two vertices, we~get a version of the Capelli-type relation.
Then it is a task to compare explicitly such relations with beautiful results \cite{Okounkov-1,Okounkov-19,Okounkov-199,Okounkov-1996}.

(v)  There are several allusions to the existence of interesting structures related to quantum~integrability. 
First, as noted in \cite{NO2020}  by this appearance 2D Yang-Mills theory \cite{Witten}. See~also
possible connection to \cite{Gerasimov-Shatashvili}.  Then the appearance of 
the Yangians in works \cite{Olshanski-19,Olshanski-199} which, we hope, can~be related to our subject. 
And finally, the work \cite{Dubr}.

(vi) There is a direct similarity between integrals over complex matrices and integrals over unitary~matrices.
However, from our point of view direct anologues of the relations in the present paper are more involved in
the case of unitary matrices. In particular, Hurwitz numbers are replaced by a special combination of these numbers.



\begin{thebibliography}{999}

\bibitem{VKIII} Vilenkin, N.Y.; Klimyk, A.U. 
{\it Representation of Lie Groups and Special Functions. Volume 3:
Classical and Quantum Groups and Special Functions}; Kluwer
Academic Publishers: \hl{Dordrecht, The Netherlands,} 1992.

\bibitem{t'Hooft}  t'Hooft, G. A planar diagram theory for strong interactions.
\emph{Nucl. Phys.} \textbf{1974}, \emph{72}, 461--473.

\bibitem{Itzykson-Zuber}   Itzykson, C.; Zuber, J.-B. The planar approximation. II \textit{J. Math. Phys.} \textbf{1980}, \emph{21}, \hl{411.} 

\bibitem{KazakovKostovNekrasov} Kazakov, V.A.; Kostov, I.K.; Nekrasov, N.,
D-particles, Matrix Integrals and KP hierachy. 
\emph{Nucl. Phys.} \textbf{1999},~\emph{557},~413--442.
	
\bibitem{BrezinKazakov} Brezin, E.; Kazakov, V.A. Exactly solvable field theories of closed strings. \textit{Phys. Lett.} \textbf{1990}, {\it B236}, 144--150. 

\bibitem{GrossMigdal} Gross, D.J.; Migdal, A.A. A nonperturbative treatment of two-dimensional quantum  gravity. \emph{Nucl. Phys. B }\textbf{1990}, {\it 340}, 333--365.

\bibitem{Collins2016} Collins, B.; Nechita, I. Random matrix techniques in quantum information theory. \emph{J. Math. Phys.} \textbf{2016},~\emph{57},~015215.
 	
\bibitem{Preskill}  Preskill, J. Lecture Notes. Available online: \hl{http://www.theory.caltech.edu/~preskill/ph219/index.html\#lecture} \hl{(accessed on 3 AUG 2020).} 

\bibitem{Tulino} Tulino, A.M.;  Verdu, S.  Random Matrix Theory and Wireless Communications.
Foundations and Trends in Communications and Information Theory; \hl{Now Publishers Inc.: Breda, The Netherlands,} 2004; Volume 1.

\bibitem{Ak1} Akemann, G.; Ipsen, J.R.; Kieburg, M. Products of Rectangular Random Matrices: Singular Values and Progressive Scattering.  \hl{\emph{Phys. Rev. E} \textbf{2013}, \emph{88}, 052118.}

\bibitem{Ak2} Akemann, G.; Checinski, T.; Kieburg, M. Spectral correlation functions of the sum of two independent complex Wishart matrices with unequal covariances.\hl{ \emph{J. Phys. A  Math. Theor.} \textbf{2016}, \emph{49}, 315201.}

\bibitem{AkStrahov} Akemann, G.; Strahov, E. Hard edge limit of the product of two strongly coupled random matrices. \hl{\emph{Nonlinearity} \textbf{2016}, \emph{29}, 3743.
}

\bibitem{Ipsen}   Ipsen, J.R. Products of Independent Gaussian Random Matrices. \emph{arXiv} \textbf{2015}, arXiv:1510.06128; 

\bibitem{S1} Strahov, E.   Dynamical correlation functions for products of random matrices. \emph{arXiv} \textbf{2015},  arXiv:1505.02511.
     
\bibitem{S2} Strahov, E. Differential equations for singular values of products of Ginibre random matrices. \emph{arXiv} \textbf{2014},
arXiv:1403.6368.


\bibitem{Chekhov-2014} 
Ambjorn, J.; Chekhov, L.The matrix model
 for dessins d'enfants. \emph{arXiv} \textbf{2014}, arXiv:1404.4240.

\bibitem{Kazakov}
Kazakov, V.A.; Staudacher, M.; Wynter, T. Character Expansion Methods for Matrix Models of Dually Weighted Graphs. \emph{Commun. Math. Phys.} \textbf{1996}, \emph{177}, 451--468,  \hl{arXiv:hep-th/9502132.}
	

\bibitem{Kazakov2}Kazakov, V.A.; Staudacher, M.; Wynter, T. Almost Flat Planar Diagrams \emph{Commun.Math.Phys.} \textbf{1996}, \emph{179}, 235--256 
	
\bibitem{Kazakov3} Kazakov, V.A.; Staudacher, M.; Wynter, T.  Exact Solution of Discrete Two-Dimensional $R^2$ Gravity, \emph{Nucl.Phys. B471} \textbf{1996} 309-333

\bibitem{Kazakov-SolvMM}  Kazakov, V.A. Solvable Matrix Models. \emph{arXiv}\textbf{ 2000}, arXiv:hep-th/0003064.

\bibitem{Alexandrov}  Alexandrov, A. Matrix models for random partitions. \textit{Nucl. Phys. B} \textbf{2011}, {\it 851}, 620--650.

\bibitem{ChekhovAmbjorn}Ambjorn, J.; Chekhov, L.O. The matrix model
 for hypergeometric Hurwitz number.
  \emph{Theor. Math. Phys.}  \textbf{2014},~\emph{81},~1486--1498, \hl{arXiv:1409.3553.}

\bibitem{Mac}  Macdonald, I.G. {\it Symmetric Functions and Hall Polynomials}, 2nd ed.;
Clarendon Press: Oxford, UK; New~York,~NY,~USA,  1995.

\bibitem{OS-2000} Orlov, A.Y.; Scherbin, D.M. Fermionic representation for basic hypergeometric functions related to Schur~polynomials. \emph{arXiv} \textbf{2000}, arXiv:nlin/0001001

\bibitem{OS-TMP} Orlov, A.Y.; Scherbin, D.M. Hypergeometric solutions of soliton equations.
\emph{Theor. Math. Phys.} \textbf{2001},~\emph{128},~906--926.

\bibitem{O-2004-New}Orlov, A.Y. New solvable matrix integrals.
{\it  Intern.  J. Mod. Phys.} \textbf{2004}, {\it  19 } (Suppl. 02), 276--293.

\bibitem{OST-I}Orlov, A.Y.; Shiota, T.; Takasaki, K. Pfaffian structures and certain solutions to
  BKP hierarchies I. Sums over partitions. \emph{arXiv} \textbf{2012},  arXiv:math-ph/12014518. 

\bibitem{ZL}  Lando, S.K.; Zvonkin, A.K.  \emph{Graphs on Surfaces and Their Applications}; 
Encyclopaedia of Mathematical Sciences;  Zagier, D., Ed.; Springer: \hl{New York, NY, USA,}  2004; Volume 141.

\bibitem{Okounkov-Pand-2006} Okounkov, A.; Pandharipande, R. Gromov-Witten theory, Hurwitz theory and completed cycles.
\emph{Ann.~Math.} \textbf{2006}, {\it 163}, 517.


\bibitem{Dijkgraaf}  Dijkgraaf, R.  \emph{Mirror Symmetry and Elliptic Curves, The Moduli Space
of Curves};  Dijkgraaf, R.,  Faber, C.,  van~der~Geer,~G., Eds.; Progress
in Mathematics;  Birkhauser: \hl{ Basel, Switzerland, }1995; Volume 129.

\bibitem{NO2020}  Natanzon, S.M.; Orlov, A.Y.
Hurwitz numbers from matrix integrals  over Gaussian measure. accepted to  \emph{Proceedings of Symposia in Pure Mathematics}; \emph{arXiv} \textbf{2020}, arXiv:2002.00466.
 
\bibitem{NO2020F} Natanzon, S.M.; Orlov, A.Y. Hurwitz numbers from Feynman diagrams.  \emph{Theor. Math. Phys.} {\bf 2020}, {\it 204}, 1172--1199.
    
\bibitem{Chekhov-2016}  Chekhov,  L.O. The Harer-Zagier recursion for an irregular
 spectral curve. \emph{J. Geom. Phys.} \textbf{2016}, \emph{110}, 30--43, \hl{arXiv:1512.09278.}

\bibitem{O-TMP-2017} Orlov, A.Y. Hurwitz numbers and products of random matrices.
\emph{Theor. Math. Phys.} \textbf{2017}, {\it 192}, 1282--1323.

\bibitem{O-links}  Orlov, A.Y. Links between quantum chaos and counting problems.
   In \emph{Geometric Methods in Physics XXXVI, Trends in Mathematics};  Kielanowski, P., Odzijewicz, A., Previato, E., Eds.;    Birkhauser: Cham, Switzerland,  2019; pp. 355--373.

\bibitem{Amburg} Adrianov, N.M.; Amburg, N.Y.; Dremov, V.A.; Kochetkov, Y.Y.; Kreines, E.M.; Levitskaya, Y.A.; Nasretdinova,~V.F.; Shabat, G.B.   Catalog of dessins d'enfants with $\le$ 4 edges. \emph{J. Mathemat. Sci.} \textbf{2009},~\emph{158},~22--80.

\bibitem{NO2019}   Natanzon, S.M.; Orlov, A.Y. Integals of tau functions. \emph{arXiv} \textbf{2019}, arXiv:1911.02003.
    
\bibitem{Migdal} Migdal, A.A. Recursion equations in gauge field theories \emph{JETP} \textbf{1975}, \emph{42}, \hl{413.} 

\bibitem{Rusakov}  Rusakov,  B.Y.
Loop avareges and partition functions in $U(N)$ gauge theory on two-dimensional manifold.
 \emph{Mod. Phys. Lett. A} \textbf{1990}, \emph{5}, 693--703. 

\bibitem{Witten} Witten, E.  On Quantum Gauge Theories in Two Dimensions.  \emph{Com. Math. Phys.} \textbf{1991}, {\it 141}, 153--209.

\bibitem{Olshanski-19}  Olshanski, G.I. Yangians and universal enveloping algebras.  
 \emph{J. Soviet Math.} \textbf{ 1989}, \emph{47}, 2466--2473.

\bibitem{Olshanski-199} Olshanski, G.I. Representations of infinite-dimensional classical groups, 
limits of envelopingalgebras, and Yangians.  In \emph{Topics in Representation 
Theory}; Kirillov, A.A.,  Ed.; Advances in Soviet Math;   American~Mathematical~Society: Providence, RI, USA, 1991; pp. 1--66.
	
\bibitem{Okounkov-1} Okounkov, A.; Olshanski, G. {Shifted Schur functions}. \emph{Algebra i Analiz} \textbf{1998}, \emph{9}, 73--146.

\bibitem{Okounkov-19} Okounkov, A.  Shifted  Schur  functions  II.  The  binomial  formulafor  characters   of  classical  groups  and  its~applications.  In \emph{American Mathematical Society Translations}, v. 185, Kirillov's Seminar on Representation Theory; \hl{ 1998; American Mathematical Society, USA} 
 pp. 245--271.
 

\bibitem{Okounkov-199}  Okounkov, A. \textit{Quantum Immanants and Higher Capelli Identities};
Transformation Groups, v. 1 \hl{Springer, Boston} 
 1996; pp. 99--126.


\bibitem{Okounkov-1996} Okounkov, A. Young Basis, Wick Formula, and Higher Capelli Identities. \emph{Internat. Math. Res. Not.}  \textbf{1996},~\emph{17},~817--839.
	
    
\bibitem{VershikOkounkov}
Vershik, A.M.; Okounkov, A.Y. A new approach to the representation theory
of symmetric groups. II. \emph{J.~Math.~Sci.} \textbf{2005}, \emph{131},  5471--5494.

\bibitem{Gerasimov-Shatashvili} Gerasimov, A.A.; Shatashvili, S.L. 
Two-dimensional Gauge Theory and Quantum Integrable Systems. \emph{arXiv}~\textbf{2007}, arXiv:0711.1472.

\bibitem{Rumanov}  Rumanov, I.  Classical integrability for beta ensembles and general Fokker-Plank equation. \emph{J. Math. Phys.} \textbf{2015}, {\it 56}, 013508.

\bibitem{MirMorSem} Mironov, A.; Morozov, A.; Semenoff, G. Unitary Matrix Integrals in the Framework of he Generalized Kontsevich Model. \emph{Intern J. Mod. Phys. A} \textbf{1996}, \emph{11}, 5031--5080.

\bibitem{JM}  Jimbo, M.;  Miwa, T.  Solitons and Infinite
Dimensional Lie Algebras. {\it Publ. RIMS Kyoto Univ.} \textbf{1983},~{\it 19},~943--1001.

\bibitem{UT} Ueno, K.; Takasaki, K.  Toda lattice hierarchy. {\it Adv. Stud. Pure Math.} \textbf{1984}, {\it  4},  1--95.

\bibitem{Takasaki} Takasaki, K. Initial value problem for the Toda lattice hierarchy. \emph{Adv. Stud. Pure Math.} \textbf{1984}, \emph{4}, 139--163.
 
\bibitem{Takebe} Takebe, T. Representation Theoretical Meaning
of Initial Value Problem for the Toda Lattice Hierarchy I.~{
LMP}. \textbf{1991}, {\it 21}, 77--84. Takebe, T. Representation
Theoretical Meaning of Initial Value Problem for the Toda Lattice
Hierarchy II. {\it Publ. RIMS Kyoto Univ.} \textbf{1991}, {\it 27}, \hl{491--503.} 

\bibitem{ELSV} Ekedahl, T.; Lando, S.; Shapiro, M.; Vainshtein, A. On Hurwitz numbers and Hodge integrals.
\emph{Comptes~Rendus~de~l'Academie~des~Sciences-Series~I-Mathematics} \textbf{1999}, \emph{146}, 175-1180. 

\bibitem{Okounkov-2000}Okounkov,  A.   Toda equations for Hurwitz numbers. {\it Math. Res. Lett.} \textbf{2000}, {\it 7}, 447--453.

\bibitem{M1}  Mednykh, A.D.  Determination of the number of nonequivalent covering over a compact Riemann surface.
\emph{Soviet Math. Dokl.} \textbf{1978}, \emph{19}, 318--320.

\bibitem{M2}  Mednykh, A.D.; Pozdnyakova, G.G.  The number of nonequivalent covering over a compact nonorientable~surface. S\emph{ibirs. Mat.} \textbf{1986}, \emph{27}, 123-131.

\bibitem{GARETH.A.JONES} Jones, G.A. Enumeration of Homomorphisms and
Surface-Coverings.  \emph{Q. J. Math. Oxford} \textbf{1995}, \emph{46}, 485--507.

\bibitem{MM1} Mironov, A.D.; Morozov, A.Y.; Natanzon, S.M. Complect set of cut-and-join operators in the Hurwitz-Kontsevich theory. \emph{Theor. Math.Phys.} \textbf{2011}, \emph{166}, 1--22.  

\bibitem{O-1987}  Orlov, A.Y. \textit{Vertex Operator, $\bar{\partial}$-Problem, Symmetries, Variational 
Identities and Hamiltonian Formalism for 2+ 1 Integrable Systems};
Nonlinear and Turbulent Processes in Physics; World Scientific: Singapore, 1987.

\bibitem{MM3} Mironov, A.; Morozov, A.; Natanzon, S.
{Algebra of differential operators associated with Young diagramms}. \textit{J. Geom. 
Phys.} \textbf{2012}, \emph{62},  148--155.

\bibitem{PerelomovPopov1} Perelomov, A.M.;  Popov, V.S.   
Casimir operators for groups
$U(N)$ $SU(N)$. \emph{Yadernaya Fizika} \textbf{1966}, {\it 3}, 924--931.


\bibitem{PerelomovPopov2} Perelomov, A.M.;  Popov, V.S.    
Casimir operators for classical groups. \emph{Doklady AN SSSR} \textbf{1967}, {\it 174},
287--290.(In~Russian)


\bibitem{PerelomovPopov3} Perelomov, A.M.;  Popov, V.S.   
Casimir operators for semisimple Lie groups.
\emph{Izavestia AN SSSR} \textbf{1968},~\emph{32},~1368--1390.

\bibitem{Zhelobenko}  Zhelobenko, D.P. \textit{Compact Lie Groups and Their Representations}; American Mathematical Soc.: Providence,~RI,~USA,  1973.

\bibitem{Dubr}  Dubrovin, B.A. Symplectic field theory of a disk, quantum integrable 
systems, and Schur polynomials. \textit{arXiv}~\textbf{2016}, arXiv:1407.5824.







	

    
    


 
	
 







	
	
 



  


	








  

  
  
 
 

 


 


	

		


   






 














\end{thebibliography}
\end{document}